\documentclass[a4paper]{scrartcl}

\bibliographystyle{plainurl}

\usepackage{amsmath,amsthm,amsfonts,amssymb,mathrsfs}
\usepackage{fullpage}
\usepackage[hidelinks]{hyperref}
\usepackage{cleveref, thm-restate}
\usepackage{subcaption}
\usepackage{complexity}
\usepackage{wrapfig}
\usepackage{commath}
\usepackage[ruled,nofillcomment]{algorithm2e}
\usepackage{mathtools}
\usepackage{todonotes}
\usepackage{setspace}
\usepackage{enumitem}
\usepackage{multirow}
\usepackage{makecell}
\usepackage{thmtools, thm-restate}
\usepackage{xspace}
\usepackage{tikz}
\usetikzlibrary{calc}
\usepackage{tkz-euclide}
\pgfdeclarelayer{background}
\pgfdeclarelayer{foreground}
\pgfsetlayers{background,main,foreground}

\newtheorem{theorem}{Theorem}

\newtheorem{definition}[theorem]{Definition}

\newtheorem{lemma}[theorem]{Lemma}

\DeclareMathOperator{\cost}{cost}

\DeclareMathOperator{\old}{old}

\let\poly\relax
\DeclareMathOperator{\poly}{poly}

\newcommand{\eps}{\varepsilon}
\newcommand{\epsp}{\varepsilon'}
\newcommand{\ebalanced}{$\eps$-balanced\xspace}
\newcommand{\opt}{\text{opt}}

\newcommand{\mball}[2][]{\ensuremath{\text{MB}_{#1}{\left( #2 \right)}}}
\newcommand{\ball}[3][]{\ensuremath{\text{B}_{#1}{\bigl( #2, \, #3 \bigr)}}}

\newcommand{\col}{col}

\DeclarePairedDelimiter\ceil{\lceil}{\rceil}

\newcommand{\orcidID}[1]{\href{https://orcid.org/#1}{\hspace*{0.1cm}\includegraphics{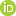}}}

\title{Approximating Fair \texorpdfstring{$k$}{k}-Min-Sum-Radii in Euclidean Space\thanks{Funded by \emph{Deutsche Forschungsgemeinschaft} (DFG, German Research Foundation) – Project 456558332.}}

\author{Lukas Drexler\orcidID{0000-0001-9395-6711} \and
Annika Hennes\orcidID{0000-0001-9109-3107} \and
Abhiruk Lahiri\orcidID{0009-0008-7556-3445} \and
Melanie Schmidt\orcidID{0000-0003-4856-3905} \and
Julian Wargalla\orcidID{0000-0003-4583-7288}}
\begin{document}

\maketitle

\begin{abstract}
The $k$-center problem is a classical clustering problem in which one is asked to find a partitioning of a point set $P$ into $k$ clusters such that the maximum radius of any cluster is minimized. It is well-studied. But what if we add up the radii of the clusters instead of only considering the cluster with maximum radius? This natural variant is called the $k$-min-sum-radii problem. It has become the subject of more and more interest in recent years, inspiring the development of approximation algorithms for the $k$-min-sum-radii problem in its plain version as well as in constrained settings.

We study the problem for Euclidean spaces $\mathbb{R}^d$ of arbitrary dimension but assume the number $k$ of clusters to be constant. In this case, a $\PTAS$ for the problem is known (see Bandyapadhyay, Lochet and Saurabh~\cite{BandyapadhyayLS23}). 
Our aim is to extend the knowledge base for $k$-min-sum-radii to the domain of \emph{fair} clustering. We study several group fairness constraints, such as the one introduced by Chierichetti et al.~\cite{chierichetti2017fair}. In this model, input points have an additional attribute (e.g., colors such as red and blue), and clusters have to preserve the ratio between different attribute values (e.g., have the same fraction of red and blue points as the ground set). Different variants of this general idea have been studied in the literature. To the best of our knowledge, no approximative results for the fair $k$-min-sum-radii problem are known, despite the immense amount of work on the related fair $k$-center problem.

We propose a $\PTAS$ for the fair $k$-min-sum-radii problem in Euclidean spaces of arbitrary dimension for the case of constant $k$. To the best of our knowledge, this is the first $\PTAS$ for the problem. It works for different notions of group fairness.
\end{abstract}

\newpage \clearpage
\setcounter{page}{1}
\section{Introduction}

The \emph{$k$-min-sum-radii} problem (\textsf{$k$-MSR} for short) is a clustering problem that resembles two well-known problems, namely the \emph{$k$-center} problem and the \emph{$k$-median} problem. 
Given a set of points $P$ and a number $k$, these problems ask for a set $C$ of $k$ cluster centers and evaluate it according to the distances $d_{\min}(x) = \min \{ d(c,x) \mid c \in C \}$ between points $x$ and their closest center in $C$.
The $k$-center objective $\max\{d_{\min}(x) \mid x \in P \}$ focuses on the radii of the resulting clusters, while the $k$-median objective $\sum_{x \in P} d_{\min}(x)$ sums up all individual point's costs. 
In the latter case, large individual costs can average out and so in scenarios where we really want to restrict the maximum cost an individual point can induce, $k$-center is the better choice. However, looking only for a single maximum distance completely ignores the fine-tuning of up to $k-1$ smaller clusters. 

The \emph{$k$-min-sum-radii} problem goes an intermediate way: 
It looks for $k$ centers $c_1,\ldots,c_k$ and corresponding clusters $C_1,\ldots, C_k$ and sums up the radii, i.e., the objective is to minimize $\sum_{i=1}^k \max_{x \in C_i} d(x,c_i)$ (or in other words: to minimize the \emph{average} radius). This objective allows for the fine-tuning of all clusters while still maintaining that the maximum cost of individual points is reasonably bounded (although it may be higher than for $k$-center by a factor of $k$). Another variation, known as the \emph{$k$-min-sum-diameter} problem, aims to minimize the sum of the diameters of the clusters.
The $k$-min-sum-radii problem has a close connection with the base station placement problem arising in wireless network design~\cite{Lev-TovP05}, where the objective is to minimize the energy required for wireless transmission which is proportional to the sum of the radii of coverage of the base stations. The mathematical model of this problem translates to the \emph{minimum sum radii cover problem} where we have a set of client locations and a set of server locations. The objective is to cover the set of clients with a set of balls whose centers are located at a subset of server locations such that the sum of the radii of the balls is minimized. 

There has been great interest in designing good approximation algorithms for the $k$-min-sum-radii problem. 
Charikar and Panigrahy~\cite{CP04} give an $O(1)$-approximation for the metric $k$-min-sum-radii problem (and the $k$-min-sum-diameter problem) based on the primal-dual framework by Jain and Vazirani for $k$-median. 
It was recently refined by Friggstad and Jamshidian~\cite{FJ22} to obtain a $3.389$-approximation for $k$-min-sum-radii which is currently the best-known approximation factor for the general case.
For constrained $k$-min-sum-radii, lower bounds, outliers and capacities have been studied. Ahmadian and Swamy~\cite{AS16} built upon~\cite{CP04} to obtain a $3.83$-approximation for the non-uniformly lower bounded $k$-min-sum-radii problem. They also give a $(12.365 + O(\eps))$-approximation for \textsf{$k$-MSR} with outliers that runs in time $n^{O(1/\eps)}$. 
Inamdar and Varadarajan~\cite{IV20} derive a $28$-approximation for the uniformly capacitated $k$-min-sum-radii problem, but this algorithm is an $\FPT$ approximation algorithm with running time $O(2^{O(k^2)}\cdot n^{O(1)})$. Bandyapadhyay, Lochet, and Saurabh~\cite{BandyapadhyayLS23} also give an $\FPT$-approximation: They develop a $(4+\eps)$-approximation algorithm with $2^{O(k\log (k/\eps))}\cdot n^3$ running time for \textsf{$k$-MSR} with uniform capacities and a $(15 + \eps)$-approximation algorithm for \textsf{$k$-MSR} with non-uniform capacities that runs in time $2^{O(k^2 \log k)}\cdot n^3$. 

In the Euclidean case, it is possible to obtain better results.
In the plane, every cluster in the optimal min-sum radii clustering lies inside some convex polygon drawn from the solution centers that partition the plane into k disjoint convex regions. The dual of that partition is an internally triangulated planar graph. Capolyleas et al. \cite{CRW1991} use this fact to enumerate over $O(n^{6k})$ possible solutions to solve the problem exactly for $d=2$. Gibson et al.~\cite{GibsonKKPV12} also give an exact algorithm for the Euclidean $k$-min-sum-radii problem in the plane. Their algorithm is based on an involved dynamic programming approach and has a running time of $O(n^{881})$ for $d=2$. 
Bandyapadhyay, Lochet and Saurabh~\cite{BandyapadhyayLS23} give a randomized algorithm with running time $2^{O((k/\eps^2)\log k)}\cdot dn^3$ which outputs a $(1+\eps)$-approximation with high probability. Their algorithm can handle capacitated $k$-min-sum-radii but allows the capacities to be violated by at most an $\eps$-fraction. They also present a $\PTAS$ for \textsf{$k$-MSR} with strict capacities for both constant $k$ and constant $d$ with running time  $2^{O(kd\log(\frac{k}{\eps}))}n^3$.

In this paper, we advance the active research on fairness in clustering (see \cite{caton2020fairness, chhabra2021overview} for surveys on the topic) 
and tackle the problem of $k$-min-sum-radii under a variety of different group fairness notions. 
These notions assume that the data points belong to different protected groups, represented by different colors. We will denote the set of colors by $\mathcal{H}$. For $X\subseteq P$ and $h\in \mathcal{H}$, let $\col_{h}(X) \subseteq X$ denote the subset of points within $X$ that carry color $h$. Now the notion of \emph{exact fairness} requires that in every cluster the proportion of points of a certain color is the same as their proportion within the complete point set, i.e., a clustering $\mathscr{C}$ fulfills {\em exact fairness} if 
$ \frac{|\col_{h}(C)|}{|C|} = \frac{|\col_h(P)|}{|P|} $
for every color $h\in \mathcal{H}$ and cluster $C\in \mathscr{C}$. This notion is for example defined in~\cite{rosner2018privacy}. Our method can handle exact fairness but also other notions as it is indeed capable to handle the more general class of \textit{mergeable constraints}.  
A clustering constraint is \emph{mergeable} if the union $C \cup C'$ of any possible pair of clusters $C, C'$ satisfying the constraint does itself satisfy the constraint (cf.~\cite{Arutyunova021}). In other words, merging clusters does not destroy the property of satisfying the constraint.

Important examples of mergeable constraints are (a) several \emph{fairness constraints} (see the appendix for a list),
(b) {\em lower bound constraints} that require every cluster to contain at least a certain fixed number of points, and (c) \emph{outliers} (see the end of the paper) in which a fixed number $z$ of points can be ignored by any clustering (one can model this as a kind-of-mergeable constraint by viewing it as a $(k+z)$-clustering with the constraint that at most $k$ clusters contain more than one point and the rest is singleton clusters). On the other hand, upper bounds on the cardinality of clusters (capacities) are not mergeable because merging clusters may violate the capacity constraint.

\begin{figure}[t]
\centering
\begin{tikzpicture}
\def \pointsize {0.05cm}

\draw (-0.05,-0.05) -- (0.05,0.05);
\draw (0.05,-0.05) -- (-0.05,0.05);

\def \number {32}
\def \radius {1.5cm}
\def \degree {360/\number}

\foreach \s in {1,...,\number}{
 \node [circle,draw, fill=black,inner sep=0cm, minimum width=\pointsize] at ({\degree * (\s -1)}:\radius) (a\s) {};
}

\def \number {32}
\def \radius {2cm}
\def \degree {360/\number}

\foreach \s in {1,...,\number}{
 \coordinate  (c\s) at ({\degree * (\s -1)}:\radius);
}

\def \number {8}
\def \radius {0.25cm}
\def \degree {360/\number}

\foreach \F in {13,15,17,19,21,1,3,5,31,29}{
  \foreach \s in {1,...,\number}{
   \node [circle,draw,fill=black,inner sep=0cm, minimum width=\pointsize] at ($({\degree * (\s -1)}:\radius)+(c\F)$) (j\s) {};
}
}

\begin{scope}[xshift=8cm]
\def \number {16}
\def \radius {1.5cm}
\def \degree {360/\number}

\foreach \s in {1,...,\number}{
 \node [circle,draw,fill=black,inner sep=0cm, minimum width=\pointsize] at ({\degree * (\s -1)}:\radius) (g\s) {};
}

\node [fill=orange,draw=orange,circle,inner sep=0cm, minimum width=0.15cm, label=left:{$c_1$}] at ($(g9)-(0.5,0)$) {};
\node [fill=blue,draw=blue,circle,inner sep=0cm, minimum width=0.15cm, label=right:{$x$}] at ($(g9)$) {};

\draw (-0.05,-0.05) -- (0.05,0.05);
\draw (0.05,-0.05) -- (-0.05,0.05);
\node [label=above:{$c_2$}] at (0,0) {};

\draw ($(g9)+(-0.05,-0.05)-(0.5,0)$) -- ($(g9) + (0.05,0.05) - (0.5,0)$);
\draw ($(g9)+(0.05,-0.05)-(0.5,0)$) -- ($(g9) + (-0.05,0.05) - (0.5,0)$);

\end{scope}

\end{tikzpicture}
\caption{Left side: An example where $k$-min-sum-radii rather opens one cluster than eleven. Right side: An example where the cheapest $k=2$-clustering keeps $c_1$ as a singleton rather than combining it with $x$, despite the fact that $c_1$ is closer to $x$ than the center $c_2$ of the big cluster.
It is cheaper to \emph{not} assign the blue point to the orange point even though that would be a closer center.\label{fig:counterintuitive}}
\end{figure}
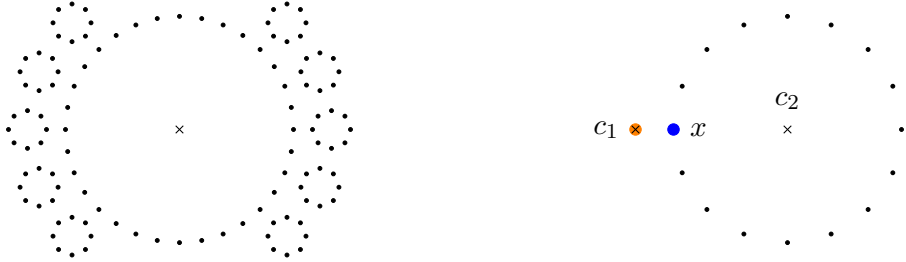

To the best of our knowledge, no results for $k$-min-sum-radii with fairness constraints are known, neither in the Euclidean setting ($\PTAS)$ nor in the metric setting (constant factor approximation) despite the huge amount of work on fair $k$-center and fair $k$-median (cf. ~\cite{bera2019fair, bercea2018cost, bohm2021algorithms, bohm2020fair,  chierichetti2017fair, harb2020kfc, rosner2018privacy})\footnote{The result by~\cite{CRW1991} might be extendable to the setting of fair $k$-min-sum-radii to obtain an exact algorithm for constant $k$ and $d=2$, but an in-depth analysis would be required to verify this idea.}. 
The reason for this may be that the $k$-min-sum-radii problem can behave quite counter-intuitively and has properties unlike both the $k$-center and $k$-median problem. One such property is that a $k$-min-sum-radii solution may actually cost \emph{more} if we open more centers (see Figure~\ref{fig:counterintuitive}, left side) which cannot happen for (plain) $k$-center or $k$-median. This is a problem for the design of fair clustering algorithms because for $k$-center and $k$-median, these are built by computing fair micro-clusters (also called fairlets) first and then assembling the final fair clustering from the micro-clusters (cf.~\cite{chierichetti2017fair}). Another uncommon property of $k$-min-sum-radii is that even without any constraints, assigning points to centers does not have the locality property: It may be
beneficial to assign to a further away center (see Figure~\ref{fig:counterintuitive}, right side). This has been observed for other clustering objectives when side constraints enter the picture, but for $k$-min-sum-radii, it already happens without any constraints.

\paragraph*{Our Result}

We present a simple $\PTAS$ for the Euclidean \textsf{$k$-MSR} problem with mergeable constraints that works for constant $k$ and arbitrary dimension $d$. 
In particular, to the best of our knowledge, we provide the first approximate results for fair \emph{$k$-MSR}.

\begin{restatable}{theorem}{maintheorem}
    \label{thm:result}
    For every  $0< \eps < 1/2$, there exists an algorithm that computes a $(1+\eps)$-approximation for $k$-min-sum-radii with mergeable constraints in time $d \cdot \poly(n) \cdot f(k, \eps)$, if the corresponding constrained $k$-center problem has a constant-factor polynomial time approximation algorithm.
    If no such $k$-center approximation exists, the running time increases to $d \cdot n^{\poly(k, 1/\eps)} \cdot f(k, \eps)$ (see Theorem~\ref{thm:result_2}).
\end{restatable}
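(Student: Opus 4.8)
The plan is to trade a bounded amount of guessing against one constrained-assignment subroutine, so that for the right guess we recover a $(1+O(\eps))$-approximate clustering in polynomial time. A $k$-min-sum-radii solution is nothing but a choice of $k$ balls together with a valid partition of $P$ into pieces lying inside them, so it suffices to enumerate a family of $k$-tuples of balls that provably contains one whose total radius is at most $(1+O(\eps))\opt$, and, for each such tuple $\tilde B_1,\dots,\tilde B_k$, to test in polynomial time whether $P$ can be partitioned into pieces $C_1',\dots,C_k'$ with $C_i'\subseteq\tilde B_i$ and every $C_i'$ satisfying the constraint; we then output the cheapest feasible partition over all tuples. I would guess three ingredients: the order of magnitude of $\opt$; the (rounded) radius profile $r_1\ge\dots\ge r_k$ of an optimal solution; and, for each cluster, enough of its points to reconstruct an almost-optimal center without resorting to a grid in $\mathbb{R}^d$.

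For the scale, if a constant-factor (say $\alpha$) approximation for the corresponding constrained $k$-center problem exists, I run it once to obtain a valid clustering of known maximum radius $R$. Since the optimal $k$-min-sum-radii clustering is in particular a valid $k$-clustering whose maximum radius is at most its objective value, $R/\alpha\le\opt_{k\text{-center}}\le\opt$, while the clustering just computed has min-sum-radii cost at most $kR$, so $\opt\in[R/\alpha,\,kR]$ --- an interval of multiplicative width depending only on $k$ and $\alpha$ --- and $\opt$ can be guessed up to a factor $(1+\eps)$ with $O(\tfrac1\eps\log k)$ guesses. Without such an oracle I would instead bound $\opt$ from below by a Gonzalez-style $(k+1)$-point packing and from above by the one-cluster clustering $\{P\}$ (valid by mergeability whenever the instance is feasible), which still needs only polynomially many guesses. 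Given the guessed value of $\opt$, I round each $r_i$ up to the next power of $(1+\eps)$ and floor radii below $\eps\opt/k^2$ at $\eps\opt/k^2$; this changes the objective only by a $(1+O(\eps))$ factor, and the rounded profile ranges over $\big(\tfrac1\eps\log\tfrac k\eps\big)^{O(k)}=f(k,\eps)$ possibilities.

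The heart is reconstructing the centers in a dimension-free way, since a grid of width $\eps r_i$ in $\mathbb{R}^d$ would cost $(1/\eps)^{\Theta(d)}$. Here I use the standard $(1+\eps)$-approximate minimum-enclosing-ball procedure: starting from any point of the optimal cluster $C_i$ and iterating $O(1/\eps^2)$ times the move ``shift the current center a $1/t$-fraction towards the point of $C_i$ farthest from it'' yields a center $\tilde c_i$ with $C_i\subseteq B(\tilde c_i,(1+\eps)r_i)$, and this final center is determined solely by the $O(1/\eps^2)$ points the procedure selects, which one recovers by replaying the deterministic, $O(d)$-per-step update rule on that sequence. Guessing this sequence of input points per cluster by brute force costs $n^{O(k/\eps^2)}$, which is the source of the weaker running time; the balls are then $\tilde B_i=B(\tilde c_i,(1+\eps)r_i)$ with the guessed $r_i$. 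For each guessed tuple of balls I solve the feasibility/assignment problem above; for exact fairness and the other group-fairness notions this is a polynomial-time flow/matching problem, and mergeability is exactly what keeps this step well-behaved. Correctness follows because, for the tuple matching an optimal profile and centers, the optimal partition $C_1,\dots,C_k$ is itself feasible ($C_i$ is valid and $C_i\subseteq\tilde B_i$), so the returned solution has cost at most $(1+O(\eps))\sum_i r_i=(1+O(\eps))\opt$; rescaling $\eps$ yields the $(1+\eps)$-approximation, with total running time $d\cdot n^{\poly(k,1/\eps)}\cdot f(k,\eps)$, and $d\cdot\poly(n)\cdot f(k,\eps)$ once the coreset enumeration is removed as below.

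The step I expect to be the main obstacle is precisely obtaining the faster, polynomial-time bound: replacing the $n^{O(k/\eps^2)}$ enumeration of coreset sequences by a candidate set of size $\poly(k,1/\eps)$ per cluster, extracted from a constant-factor constrained $k$-center solution while preserving the $(1+\eps)$ guarantee, is the technically delicate part. A secondary difficulty --- the reason a fairlet-style reduction fails, visible already in Figure~\ref{fig:counterintuitive} --- is that assignment in $k$-min-sum-radii is not local, so the assignment step genuinely has to be solved as a constrained assignment problem rather than by a nearest-center rule, and one has to verify that this subroutine runs in polynomial time for each fairness notion in scope.
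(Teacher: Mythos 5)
Your proposal, even if every detail were filled in, establishes at best the weaker running time $d\cdot n^{\poly(k,1/\eps)}\cdot f(k,\eps)$ of Theorem~\ref{thm:result_2}: you reconstruct each center by brute-force guessing an $O(1/\eps^2)$-point coreset sequence from $P$, which costs $n^{O(k/\eps^2)}$, and you explicitly flag the reduction to $d\cdot\poly(n)\cdot f(k,\eps)$ as an unresolved obstacle. But that reduction \emph{is} the content of the main theorem, and the idea that achieves it is missing from your plan. The paper does not guess which points of $P$ form the coresets; it selects the next point deterministically (the first point, in a fixed order, not yet covered by the current enlarged balls) and only guesses its cluster label, so the search space is $\{1,\ldots,k\}^{L}$ with $L=O(k/\eps^3)$ rather than $P^{O(k/\eps^2)}$. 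Making this work requires two structural facts you do not establish: (i) there is a near-optimal covering whose minimum enclosing balls remain disjoint even after inflation by $\gamma=1+\eps+2\sqrt{\eps}$ ($\gamma$-separation), so that every newly selected point assigned to cluster $i$ lies outside the current ball $B_i$ and forces its MEB to grow by a factor $1+\eps^2/16$; and (ii) the covering is $\eps$-balanced, so the starting ball of radius $\approx\eps\widetilde{r_i}$ is already an $\Omega(\eps)$-fraction of the final radius, which caps the number of growth steps at $O(1/\eps^3)$ per cluster. Both are obtained in Lemma~\ref{lem:existence_epsilon_separated} by repeatedly merging clusters that are too close and inflating radii that are too small --- and this merging step is where mergeability of the constraint is actually used, namely to certify that the modified solution is still feasible. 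In your writeup mergeability is instead invoked to keep the assignment subroutine ``well-behaved,'' which is not its role.

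A secondary issue is your assignment step. You propose, for each guessed tuple of balls, to solve a constrained partition problem ($C_i'\subseteq\tilde B_i$, each $C_i'$ fair) by flow or matching, and assert this is polynomial ``for each fairness notion in scope.'' This is neither justified for arbitrary mergeable constraints (the theorem is stated for the whole class, not for a list of fairness notions) nor necessary: because the reconstructed balls are pairwise disjoint by separation, the clustering is forced to be $C_i=B_i\cap P$, and the algorithm only needs to \emph{check} validity of this one candidate clustering, which is trivially polynomial. Without disjointness your approach genuinely needs the assignment solver, and for, e.g., exact fairness with several colors and ball-membership restrictions this is a nontrivial multi-commodity feasibility question that you would have to prove tractable. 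So the gap is concrete: you are missing the existence of a cheap, $\gamma$-separated, $\eps$-balanced covering, and with it both the $f(k,\eps)$-bounded oracle enumeration and the collapse of the assignment step to a feasibility check.
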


\subparagraph*{How We Obtain the PTAS.}
\label{subpar:PTAS}
Our algorithm is based on an idea by B\u{a}doiu, Har-Peled, and Indyk \cite{BadoiuHI02} who obtain a $\PTAS$ for $k$-center. Contrary to other $k$-center algorithms, the main idea of this algorithm -- to iteratively construct minimum enclosing balls around subsets of optimum clusters until all points are covered -- \emph{does} carry over to \textsf{$k$-MSR}. However, we need to resolve significant obstacles that are due to the more complex structure of \textsf{$k$-MSR}, as illustrated in Figure~\ref{fig:counterintuitive}. 
In an optimal clustering, points do not necessarily get assigned to their closest center, so we cannot derive a lower bound for the initial size of the growing balls in the same manner as B\u{a}doiu et al. This, however, is necessary to upper bound the running time of the algorithm.

Our approach to repair the analysis is mainly based on proving that there always exists a close-to-optimum \textsf{$k$-MSR} solution with a nice structure, as described in Section~\ref{sec:nicelystructuredsolutions}: (1) The Minimum Enclosing Balls (MEBs) around all clusters do not intersect, even if we enlarge all MEBs by some factor $\gamma$ that depends on $\eps$. We call such a solution $\gamma$-separated. (2) The ratio between the smallest and the largest radius in the solution is bounded by $\eps/k$. We call a solution with this property \emph{$\eps$-balanced}. 
Achieving (2) is straightforward, but establishing (1) and (2) simultaneously requires a bit more work. Since we establish (1) mainly by merging close clusters, this technique still works under mergeable constraints. 

After proving the existence of an approximately optimal solution that is sufficiently separated and balanced, we reconstruct this solution by adjusting the approach of B\u{a}doiu et al.~\cite{BadoiuHI02} appropriately.
To ensure an upper bound on the running time, we  have to extend their guessing oracle (that answers membership queries) to also provide approximate radii for all clusters. How this is done is outlined in Section~\ref{sec:guessingradii}.
With the oracle in place, the structure of our algorithm is as follows:

\begin{itemize}
    \item Initialize $S_i=\emptyset$ for $i=1,\ldots,k$ and $P'=P$
    \item Ask the oracle for radii $\widetilde{r_1}, \widetilde{r_2}, \ldots, \widetilde{r_k}$
    \item Repeat until $P' = \emptyset$:
    \begin{enumerate}
        \item Select an arbitrary point $p_i$ from $P'$
        \item Query the oracle for an index $j$ and add $p_i$ to $S_j$
        \item \label{item:starting-distance}
        If $|S_j|=1$: Remove all points from $P'$ that are within distance $\approx\hspace*{-0.05cm}\eps \widetilde{r_j}$ of $p_i$
        \item \label{item:sufficient-additional-distance}
        If $|S_j| > 1$: Compute the minimum enclosing ball of $S_j$, enlarge it by an appropriate\\ \hspace*{1.7cm} factor and remove all points in the resulting ball from $P'$
    \end{enumerate}
\end{itemize}

\begin{figure}[t]
    \centering

    \begin{subfigure}[t]{0.32\textwidth}
    \centering
    \resizebox{\linewidth}{!}{
    \includegraphics{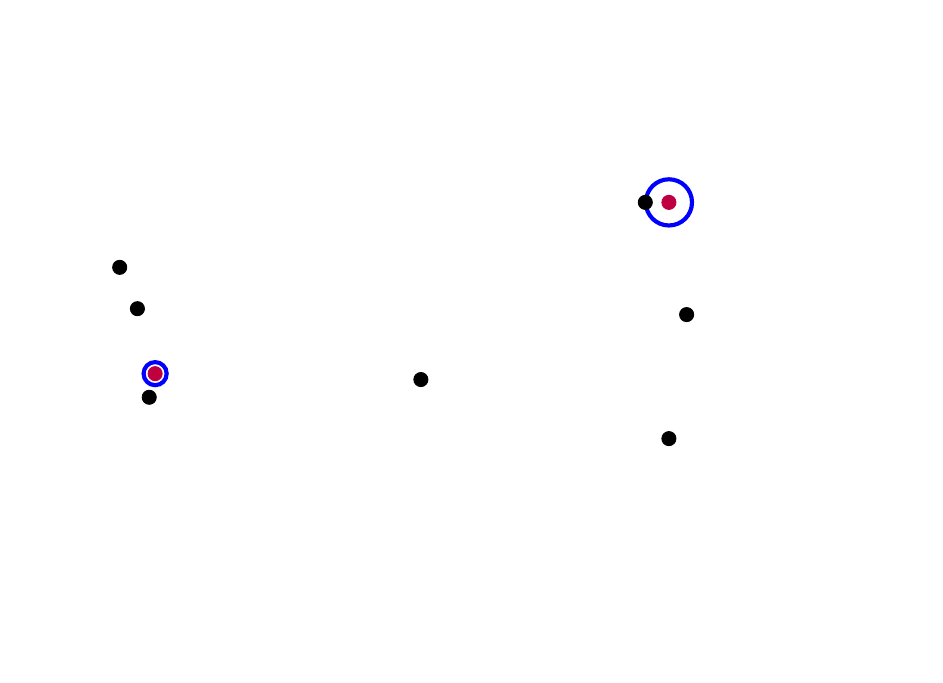}    }    \caption{The first two points are chosen and equipped with a small ball to start with.}
    \end{subfigure}\hfill
    \begin{subfigure}[t]{0.32\textwidth}
    \centering
    \resizebox{\linewidth}{!}{\includegraphics{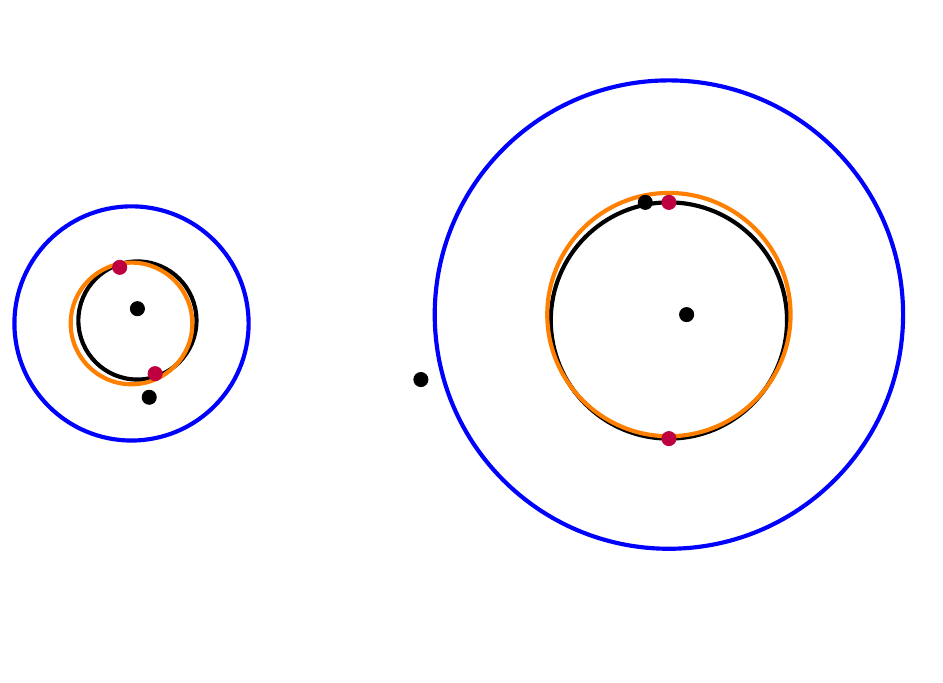}    }    \caption{Two iterations later, a second point has been found for both left and right cluster, so the small balls have been replaced.}
    \end{subfigure}\hfill
    \begin{subfigure}[t]{0.32\textwidth}
    \centering
    \resizebox{\linewidth}{!}{\includegraphics{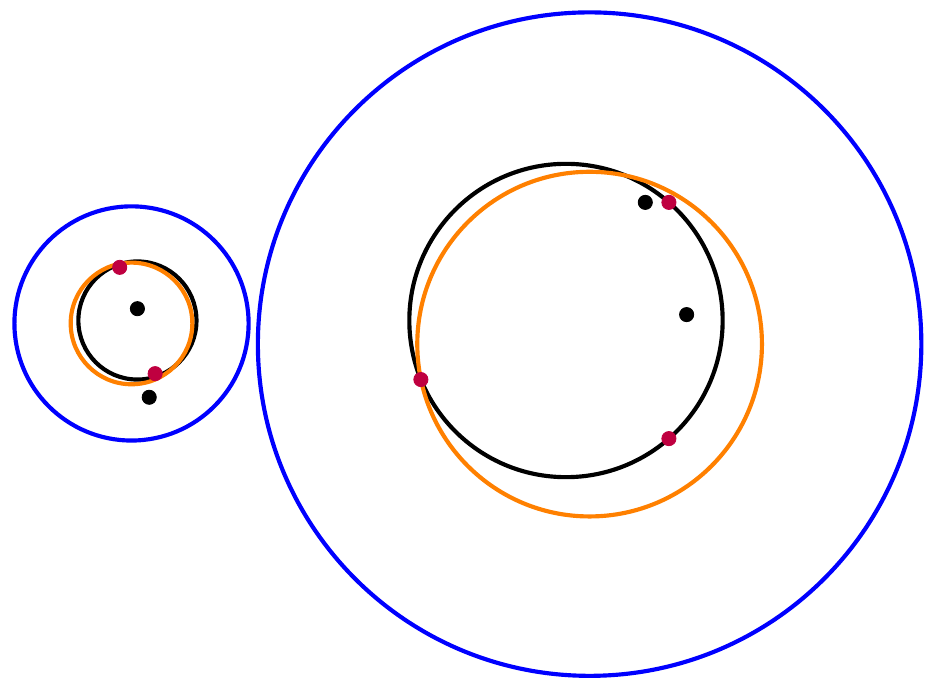}}
    \caption{After a third point was added to the right cluster, all points in the point set are covered and the run ends.}
    \end{subfigure}\hfill
    \caption{Example run of the algorithm for $\eps=0.2$. In every iteration, the purple points depict the points that were already chosen by the algorithm. For reference, the black circles represent the true minimum enclosing balls of the subsets $S_i$. These are not computed in the algorithm, but only the $(1+\eps)$-approximations of these, depicted in orange. 
    The blue circles enclose the areas which we ignore when sampling new points. That is, for singleton clusters, it is $B(s_j, \frac{\eps}{1+\eps}\widetilde{r_j})$, and once the algorithm found two points from a cluster, it computes an approximate MEB and enlarges it by a factor of $\gamma = 1+\eps + 2\sqrt{\eps}$ to obtain the new blue ball.}
    \label{fig:selection}
\end{figure}
We show that the algorithm will stop after $f(k,\eps)$ iterations, resulting in a $\PTAS$ for constant $k$. Figure~\ref{fig:selection} shows an example run of the algorithm and also gives some details on what \lq $\approx \eps \widetilde{r_j}$\rq\ and \lq by an appropriate factor\rq\ mean. The respective constants are a result of the analysis and are discussed later.

\subparagraph*{Further Related Work.}
When $k$ is part of the input, the metric $k$-min-sum-radii problem is known to be $\NP$-hard, as shown in~\cite{ProiettiW06}.
The same paper gives an exact algorithm with running time~$O(n^{2k}/k!)$. Gibson et al.~\cite{GibsonKKPV10} provide a randomized algorithm for the metric $k$-min-sum-radii problem that runs in time $n^{O(\log n \log \Delta)}$ where $\Delta$ is the ratio between the largest and the smallest pairwise distance in the input and returns an optimal solution with high probability. They also show $\NP$-hardness even for shortest path metrics in weighted planar graphs and for metrics of (large enough) constant doubling dimension.
Bilò et al.~\cite{BiloCKK05} give a polynomial time algorithm for the problem when the input points are on a line. 

Behsaz and Salavatipour~\cite{BehsazS15} show a polynomial time exact algorithm for the $k$-min-sum-radii problem when the metric is induced by an unweighted graph and no cluster contains only one point. 

Based on the constant-factor approximation for $k$-min-sum-radii by Charikar and Panigrahy~\cite{CP04} mentioned in the introduction, Henzinger et al.~\cite{HenzingerLM17} develop a data structure to efficiently maintain an $O(1)$-approximate solution under changes in the input.

For the $k$-center problem with exact fairness constraints as described earlier, Bercea et al. \cite{bercea2018cost} give a 5-approximation.
Further, several {\em balance} notions have been proposed. The simplest case with only two colors was proposed by Chierichetti et al.~\cite{chierichetti2017fair}. It requires that the minimum ratio between different colors within any cluster meets a given lower bound. For its most general formulation, there exists a $14$-approximation for the $k$-center variant \cite{rosner2018privacy}.
The definition by Böhm et al.~\cite{bohm2020fair} allows more colors but is stricter in that it demands the portions of colors in a cluster to be of equal size. The authors show how the $k$-center problem under this fairness notion can be reduced to the unconstrained case while increasing the approximation factor by 2, leading to a polynomial-time $O(1)$-approximation. They also give an $O(n^{\poly(k/\eps)})$-time $(1+\eps)$-approximation. A more general notion by Bera et al.~\cite{bera2019fair} allows the number of cluster members of a certain color to lie in some color-dependent range. Harb and Shan \cite{harb2020kfc} give a 5-approximation for the $k$-center problem under this constraint.

\subparagraph*{Preliminaries.}
For a given center $c \in \mathbb{R}^d$ and radius $r \in \mathbb{R}_{\geq 0}$, define the \emph{ball of radius $r$ around $c$} to be $\ball{c}{r} = \{x \in \mathbb{R}^d \; | \; \|x - c\| \leq r\}$. 
We set $\cost\bigl(\ball{c}{r}\bigr) = r$. Let $X \subset \mathbb{R}^d$ be a set of points. We say that a ball $B$ \emph{encloses} $X$ if $X \subset B$. The ball with the smallest radius that encloses $X$ is called the \emph{minimum enclosing ball} (MEB) of $X$ and we denote it as $\mball{X}$.
The \emph{cost} of $X$ is defined as the cost of its minimum enclosing ball, $\cost(X) = \cost(\mball{X})$. 
A \emph{$k$-clustering} $\mathscr{C} = \{C_1, \ldots, C_k\}$ of a given finite set of points $P$ is a partitioning of $P$ into $k$ disjoint (possibly empty) sets. Its cost is the sum of the costs of all its individual clusters, i.e. $\cost(\mathscr{C}) = \sum_i \cost(C_i)$. 
Now we can define the {\em Euclidean $k$-min-sum-radii problem}:
Given a finite set of points $P$ in the $d$-dimensional Euclidean space $\mathbb{R}^d$ and a number $k\in \mathbb{N}$, find a $k$-clustering of $P$ with minimal cost. We can also formulate the problem in the following form: Find at most $k$ centers $c_1, \ldots, c_k \in \mathbb{R}^d $ and radii $r_1, \ldots, r_k \ge 0$ such that the union of balls $B(c_1, r_1) \cup \ldots \cup B(c_k, r_k)$ covers $P$ and the sum of the radii $\sum_{i=1}^k r_i$ is minimized. 

\section{\texorpdfstring{$k$}{k}-Min-Sum-Radii with Mergeable Constraints}
Algorithm~\ref{alg:selection} gives a detailed description of our method. Instead of an oracle, this pseudo code assumes that it is given a string $u \in \{1, \ldots, k\}^*$ as answers to membership queries and it is also given estimates for the radii $\widetilde{r_1}, \ldots, \widetilde{r_k}$.  Despite looking a bit more technical, the algorithm follows the plan outlined above: Iteratively construct balls to cover the point set. A ball shall always cover exactly one optimum cluster (or approximately optimal cluster). It starts out when the first point from that cluster is discovered. This point will be the center of a small starting ball. The starting radius is related to the true radius of the cluster (for which a good estimate has been provided). Whenever a point from a cluster is discovered, the MEB around all its discovered points is computed and the ball is increased to an enlarged version of that MEB. We are done when all points are covered by the balls.

\begin{algorithm}[h]
    \setstretch{1.2}
    \LinesNumbered
    \caption{\textsc{Selection}}
    \label{alg:selection}
    \SetKwInOut{Input}{Input}
    \SetKwInOut{Output}{Output}
    \BlankLine
    \Input{An ordered set $P \subseteq \mathbb{R}^d$, a number $k \in \mathbb{N}$, a string $u \in \{1, \ldots, k\}^*$, a set $\{\widetilde{r_1}, \ldots, \widetilde{r_k}\}$ of $k$ radii, a value $0 < \eps < 1$}
    \Output{Balls $B_1, \ldots, B_k \subseteq \mathbb{R}^d$, such that $P \subseteq \bigcup_j B_j$}
    \BlankLine
        $S_i \gets \emptyset$ for $i = 1, \ldots, k$\;
        $\gamma \gets 1 + \eps + 2\sqrt{\eps}$\;
        $X \gets \emptyset$\tcc*[r]{points that have been covered so far}
        \For{$i = 1, \ldots, |u|$}{
            $I = \{j \mid S_j = \{s_j\} \text{ is a singleton}\}$\;
            $R \gets \bigcup_{j \in I} \ball{s_j}{\frac{\eps}{1+\eps}\widetilde{r_j}}$ \label{line:epsilonball}\tcc*[r]{put small balls around singletons}
            Let $p_i$ be the point from $P \setminus (X \cup R)$ that is first in the order induced by $P$\; \label{line:select-point}
            $S_{u_i} \gets S_{u_i} \cup \{ p_i \}$\;
            \If{$|S_{u_i}| \geq 2$}
            {
                $\ball{c}{r} \gets (1 + \eps)$-approximation of $\mball{S_{u_i}}$\;
                $B_{u_i} \gets \ball{c}{\gamma r}$\;
                $X \gets X \cup (B_{u_i} \cap P)$\;\label{line:ball}
            }
        }
        \ForAll{$S_i = \{s_i\}$}{
            $B_i \gets \ball{s_i}{0}$
            ;
        }
        \Return $B_1, \ldots, B_k$\;
\end{algorithm}

To set up the analysis, we introduce a few more definitions. In the preliminaries, we have defined a clustering to be a partitioning of the underlying space. In the following section, however, it will be helpful to occasionally conceive clusterings as collections of balls that cover $P$. To avoid confusion, we term the latter coverings.

\begin{definition}
    We say that a set of balls $\{\ball{c_1}{r_1}, \ldots \ball{c_k}{r_k}\}$ forms a \emph{covering} of $P$, if $P \subseteq \bigcup_i \ball{c_i}{r_i}$. It is a \emph{disjoint} covering, if $\ball{c_i}{r_i} \cap \ball{c_j}{r_j} = \emptyset$ for all $i \neq j$.
\end{definition}

The relation between coverings and clusterings, as it concerns this paper, is straightforward. Every disjoint covering $B_1, \ldots, B_k$ of $P$ yields a unique corresponding clustering $\mathscr{C} = \{B_1 \cap P, \ldots, B_k \cap P\}$ of $P$. And, conversely, every $k$-clustering $\mathscr{C} = \{C_1, \ldots, C_k\}$ yields a corresponding covering $\{\mball{C_1}, \ldots, \mball{C_k}\}$. 

We increase the size of the balls computed during Algorithm~\ref{alg:selection} by some multiplicative factor to ensure that they grow reasonably fast, so we want them to be not only disjoint but actually separated by some positive amount.

\begin{definition}
    Let $\gamma \geq 1$. Two balls $\ball{c_1}{r_1}$ and $\ball{c_2}{r_2}$ are said to be \emph{$\gamma$-separated} if $\ball{c_1}{\gamma r_1} \cap \ball{c_2}{\gamma r_2} = \emptyset$. A covering is \emph{$\gamma$-separated} if all its balls are pairwise $\gamma$-separated.
\end{definition}

\subsection{The Main Algorithm and The Main Lemma}

The following constitutes the main technical lemma. 
Claim 1(a) is mainly an observation: For some sequence of oracle guesses (i.e., for some $u \in \{1,\ldots,k\}^{\ast}$), we always guess the cluster correctly and thus $S_i$ is always a subset of some optimum cluster.  In fact, in the end each $S_i$ can be viewed as a compact approximation in the following sense: it is a small set whose MEB covers almost all of its corresponding optimum cluster. Claims 1(b) and 1(c) state that during the algorithm, the balls that are placed around the sampled points are always disjoint. This is important for the rest of the analysis. Then Claim (2) and (3) are the core part of the original analysis by \cite{BadoiuHI02}: Whenever we add a point, the ball for the cluster grows by an appropriate factor, and once a certain threshold of points has been reached, the ball covers the true cluster we are looking for. 

\begin{lemma}
    \label{lem:construct-u}
    Let $\mathscr{B} = \{\ball{c^*_1}{r^*_1}, \ldots, \ball{c^*_k}{r^*_k}\}$ be an arbitrary covering of $P$ and $\widetilde{r_1}, \ldots, \widetilde{r_k}$ a set of radii such that $r^*_i \leq \widetilde{r_i} \leq (1 + \eps)r^*_i$ for all $i \in \{1, \ldots, k\}$. If $\mathscr{B}$ is $(1 + \eps)\gamma$-separated, with $\gamma \geq 1 + \eps + 2\sqrt{\eps}$, then there is an element $u \in \{1, \ldots, k\}^*$, such that

    \begin{enumerate}
        \item At each stage of $\textsc{Selection}(P, k, u, \{\widetilde{r_1}, \ldots, \widetilde{r_k}\}, \eps)$, the following holds for all $i$:
        \begin{enumerate}[label=(\alph*)]
            \item $S_i \subseteq \ball{c^*_i}{r^*_i}$,
            \label{item:subset}
            \item $\ball{c^*_j}{(1 + \eps)\gamma r^*_j} \cap \ball{s_i}{\frac{\eps}{1+\eps}\widetilde{r_i}} = \emptyset$ for all $j \neq i$ whenever $S_i = \{s_i\}$ is a singleton,
            \label{item:singleton}
            \item $\ball{c^*_j}{(1 + \eps)\gamma r^*_j} \cap B_i = \emptyset$ for all $j \neq i$.
            \label{item:balls-disjoint}
        \end{enumerate}

        \item With every addition of a new point, $\mball{S_i}$ grows by a factor of at least $1 + \frac{\eps^2}{16}$.
        
        \item For any index $i$ it holds that $\ball{c^*_i}{r^*_i} \subset B_i$, at the latest when $|S_i| \geq \frac{32(1+\eps)}{\eps^3}$.
    \end{enumerate}
    
\end{lemma}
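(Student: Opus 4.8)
The plan is to construct the string $u$ greedily, mirroring the run of \textsc{Selection}, and to prove claims 1(a)--(c), (2), (3) together by induction on the iteration count $i = 1, \ldots, |u|$. The key invariant driving everything is 1(a): whenever a point $p_i$ is selected in line~\ref{line:select-point} and assigned to some set $S_j$, we will set $u_i := j$ to be the unique index such that $p_i \in \ball{c^*_j}{r^*_j}$. (Since $\mathscr{B}$ is a covering, such $j$ exists; since $\mathscr{B}$ is $(1+\eps)\gamma$-separated and $(1+\eps)\gamma \geq 1$, the balls $\ball{c^*_j}{r^*_j}$ are pairwise disjoint, so $j$ is unique.) To make this a valid definition of a \emph{fixed} string $u$ rather than an adaptive oracle, I would argue that the sequence of selected points $p_1, p_2, \ldots$ is entirely determined by the already-committed prefix $u_1 \cdots u_{i-1}$ together with the radii estimates, so we can define $u$ one symbol at a time; the process terminates because, by claim (3), each index is ``touched'' at most $O((1+\eps)/\eps^3)$ times and there are only $k$ indices, so $|u| \leq f(k,\eps)$ for an appropriate $f$. (That the process must cover all of $P$ before it can stop also needs the covering property of the $B_j$ at termination, which I address below.)

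For the disjointness claims, note 1(a) immediately gives $S_i \subseteq \ball{c^*_i}{r^*_i}$, hence $\mball{S_i} \subseteq \ball{c^*_i}{r^*_i}$, so the true MEB has radius at most $r^*_i \leq \widetilde{r_i}$, and the $(1+\eps)$-approximate MEB $\ball{c}{r}$ in the algorithm satisfies $r \leq (1+\eps) r^*_i$ and is contained in $\ball{c^*_i}{(1+\eps) r^*_i}$ (a point at distance $> (1+\eps)r^*_i$ from $c^*_i$ would force the exact MEB radius above $r^*_i$). Enlarging by $\gamma$ keeps $B_i = \ball{c}{\gamma r} \subseteq \ball{c^*_i}{(1+\eps)\gamma r^*_i}$, and the $(1+\eps)\gamma$-separation of $\mathscr{B}$ then yields 1(c), i.e. $\ball{c^*_j}{(1+\eps)\gamma r^*_j} \cap B_i = \emptyset$ for $j \neq i$. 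For the singleton claim 1(b): when $S_i = \{s_i\}$ with $s_i \in \ball{c^*_i}{r^*_i}$, the small ball $\ball{s_i}{\frac{\eps}{1+\eps}\widetilde{r_i}}$ has radius $\frac{\eps}{1+\eps}\widetilde{r_i} \leq \eps r^*_i$, so it sits inside $\ball{c^*_i}{(1+\eps) r^*_i} \subseteq \ball{c^*_i}{(1+\eps)\gamma r^*_i}$, and again separation gives disjointness from the other dilated optimum balls. A consequence worth recording along the way is that the sets $R$ and the balls $B_i$ in the algorithm never ``eat into'' a foreign optimum ball, which is exactly why a freshly selected $p_i$ always lands in a new or already-partially-discovered cluster consistently with $u$.

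Claims (2) and (3) are the part inherited from B\u{a}doiu--Har-Peled--Indyk, and this is where the real work lies. When a new point $p_i$ is added to $S_i$: if $|S_i|$ was $1$, then $p_i \notin R$ means $\|p_i - s_i\| > \frac{\eps}{1+\eps}\widetilde{r_i} \geq \frac{\eps}{1+\eps} r^*_i$, so the MEB of the two points already has radius at least $\frac{\eps}{2(1+\eps)} r^*_i$; if $|S_i| \geq 2$, then $p_i$ was not covered by the previous $B_i = \ball{c}{\gamma r}$, so $\|p_i - c\| > \gamma r$ while every point of $S_i$ lies within $r$ of $c$. A standard MEB-growth lemma (if you add a point outside the $\gamma$-dilation of a ball containing the set, the MEB radius multiplies by at least $1 + \Omega((\gamma - 1)^2)$ when $\gamma$ is close to $1$) then gives the factor $1 + \eps^2/16$, using $\gamma \geq 1 + \eps + 2\sqrt{\eps}$ so that $\gamma - 1 \geq 2\sqrt{\eps}$ and $(\gamma-1)^2/\text{const} \geq \eps^2/16$. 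For (3): since $\mball{S_i} \subseteq \ball{c^*_i}{r^*_i}$, its radius is at most $r^*_i$; the \emph{initial} MEB radius (once $|S_i| = 2$) is at least $\frac{\eps}{2(1+\eps)} r^*_i$; each subsequent addition multiplies it by at least $1 + \eps^2/16$; so after at most $t$ additions with $(1 + \eps^2/16)^t \cdot \frac{\eps}{2(1+\eps)} r^*_i \geq r^*_i$, the MEB can grow no further, which forces every point of $\ball{c^*_i}{r^*_i}$ to already lie in $B_i = \ball{c}{\gamma r} \supseteq$ the $(1+\eps)$-dilate of the true MEB $\supseteq \ball{c^*_i}{r^*_i}$ — wait, the last containment needs $\gamma r \geq$ the distance from $c$ to the farthest point of $\ball{c^*_i}{r^*_i}$, which holds because $r \geq (\text{true MEB radius})$ and the true MEB, being the MEB of a subset of $\ball{c^*_i}{r^*_i}$ whose radius has stopped growing, must coincide (up to the approximation and dilation slack) with $\ball{c^*_i}{r^*_i}$ itself. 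Solving $(1 + \eps^2/16)^t \geq \frac{2(1+\eps)}{\eps}$ gives $t = O(\frac{1}{\eps^2}\log\frac{1}{\eps})$, and bounding $\frac{1}{\eps^2}\log\frac{1}{\eps} \leq \frac{1}{\eps^2} \cdot \frac{1}{\eps} \cdot \text{(small const)}$ yields the stated bound $|S_i| \geq \frac{32(1+\eps)}{\eps^3}$; matching the precise constant $32(1+\eps)/\eps^3$ is the fiddly-but-routine endgame.

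\textbf{Main obstacle.} The conceptual crux is making the ``there exists $u$'' statement rigorous: we must show the greedy choice $u_i = (\text{the index of the optimum ball containing } p_i)$ is always consistent with the algorithm's bookkeeping — in particular that a point selected from $P \setminus (X \cup R)$ is never one whose optimum ball has already been fully covered by some $B_\ell$, for then $u_i = \ell$ would wrongly reopen a finished cluster. This is precisely what claims 1(b) and 1(c) prevent, since they show $B_\ell$ and $R$ only ever lie inside dilated optimum balls that are separated from all others; so the induction for (1) must be carried jointly with the definition of $u$, not after it. The MEB growth estimate in (2) is the other technical nexus, but it is a known geometric lemma and only needs to be instantiated with the right constant.
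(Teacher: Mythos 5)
Your plan follows the paper's proof essentially step for step: $u$ is built by recording, for each selected point, the index of the (unique, by disjointness) optimum ball containing it; claims 1(a)--(c) follow from the containments $S_i\subseteq \ball{c^*_i}{r^*_i}$, $\ball{s_i}{\frac{\eps}{1+\eps}\widetilde{r_i}}\subseteq \ball{c^*_i}{(1+\eps)r^*_i}$ and $B_i\subseteq \ball{c^*_i}{(1+\eps)\gamma r^*_i}$ combined with the $(1+\eps)\gamma$-separation; and (3) combines the initial radius lower bound $\frac{\eps}{2(1+\eps)}r^*_i$ with the growth from (2). Two substantive remarks. First, for (2), your ``standard MEB-growth lemma'' ($1+\Omega((\gamma-1)^2)$ growth when the new point avoids the $\gamma$-dilation) is calibrated for exact MEBs and does not directly apply: the new point avoids the $\gamma$-dilation of the \emph{approximate} ball $\ball{c}{r}$, whose center can be displaced from the true MEB center by up to $2\sqrt{\eps}\,r_i^{\text{old}}$ --- absorbing this displacement is precisely why $\gamma$ contains the $2\sqrt{\eps}$ term --- and the binding case in the paper's appendix is not governed by $(\gamma-1)^2$ at all, but by the case where the \emph{new} true MEB center moves by at least $\frac{\eps}{2}r_i^{\text{old}}$, which via the law of cosines gives only $\sqrt{1+\eps^2/4}\geq 1+\eps^2/16$. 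So (2) genuinely requires the two-case analysis rather than a black-box citation. Second, for the count in (3) you solve $(1+\eps^2/16)^t\geq \frac{2(1+\eps)}{\eps}$ to get $t=O(\eps^{-2}\log(1/\eps))$ and then coarsen to $O(\eps^{-3})$; the paper instead notes that, since the radius never drops below its initial value, each addition increases it \emph{additively} by at least $\frac{\eps^2}{16}\cdot\frac{\eps r^*_i}{2(1+\eps)}=\frac{\eps^3 r^*_i}{32(1+\eps)}$, and the upper bound $r^*_i$ then yields exactly $\frac{32(1+\eps)}{\eps^3}$ with no logarithm and no constant-chasing. Neither point is a structural flaw --- your identification of the consistency of $u$ with claims 1(b), 1(c) as the conceptual crux is exactly right --- but both are places where the sketch as written would not close without the paper's actual argument.
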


\begin{proof}
    \begin{enumerate}
        \item  We construct $u$ by recording the proper assignments in \textsc{Selection}. This is possible because the algorithm is deterministic and because assignments do not have to be specified before points have been selected. During the first iteration, if $p_1 \in \ball{c^*_{i_1}}{r^*_{i_1}}$, set $u_1 = i_1$, and so on. Note that the covering is disjoint, so this assignment is unambiguous. Obviously, $S_i \subseteq \ball{c^*_i}{r^*_i}$ has to hold for all $i$ necessarily and \ref{item:subset} follows.
        To prove \ref{item:singleton}, assume that there exists a singleton $S_i = \{s_i\}$ in the current iteration. From the previous point, we know that $s_i \in \ball{c^*_i}{r^*_i}$ and so
        $\ball{s_i}{\frac{\eps}{1 + \eps}\widetilde{r_i}} \subset\ball{s_i}{\eps r^*_i} \subseteq \ball{c^*_i}{(1 + \eps) r^*_i}.$
        Since $\mathscr{B}$ is at least $(1 + \eps)$-separated, this proves the second point.
        To prove \ref{item:balls-disjoint}, we have to look at how each $B_i$ is constructed. We start with a $(1 + \eps)$-approximation $\ball{c_i}{r_i}$ of $\mball{S_i}$. By definition, $\ball{c_i}{r_i} \subseteq \ball{c^*_i}{(1 + \eps)r^*_i}$ and setting $B_i = \ball{c_i}{\gamma r_i}$ ensures that $B_i \subseteq \ball{c^*_i}{(1 + \eps)\gamma r^*_i}$. The claim thus again follows from the assumption that $\mathscr{B}$ is $(1 + \eps) \gamma$-separated.

        \item This part of the proof does not deviate significantly from B\u{a}doiu et al.~\cite{BadoiuHI02} and has been moved to the appendix. The main difference is that we are working with $(1 + \eps)$-approximations of radii, which adds another layer of complexity.
        
        \item Each ball $B_i$ is a $(1 + \eps) \gamma$-approximation of $\mball{S_i}$. Since $S_i \subset \ball{c^*_i}{r^*_i}$ in each iteration, it follows that $B_i \subset \ball{c^*_i}{(1 + \eps)\gamma r^*_i}$ also holds throughout. No other ball (neither from \ref{item:singleton} nor \ref{item:balls-disjoint}) can intersect $\ball{c^*_i}{r^*_i}$ and so, by continually selecting new points, at some point $\ball{c^*_i}{r^*_i} \subset B_i$ must hold. We now want to show that this happens relatively quickly and that it is necessary to add at most $\frac{32(1+\eps)}{\eps^3}$ points to $S_i$ until this state is reached. Of course, it may happen that $\ball{c^*_i}{r^*_i}$ is covered at an earlier point in time and that fewer points have to be added to $S_i$. Assume that we get at least to an iteration, where $S_i$ contains two points. 
        Since we ignored all points that were at a distance of at most $\frac{\eps }{(1+\eps)}\widetilde{r_i} \leq \eps r^*_i$ from the first selected point, $\mball{S_i}$ has to have an initial radius of at least $\frac{\eps}{2(1+\eps)} \widetilde{r_i} \geq \frac{\eps r^*_i}{2(1 + \eps)}$. As we saw in point (2), any subsequent additions of new points to $S_i$ further increase the radius by a multiplicative factor of at least $ (1+\frac{\eps^2}{16})$. Combining both of these observations gives us an upper bound on the number of iterations. First, note that the initial radius of $\frac{\eps r^*_i}{2(1+\eps)}$ grows by at least $\frac{\eps^2}{16}  \cdot \frac{\eps r^*_i}{2(1+\eps)} = \frac{\eps^3 r^*_i}{32(1+\eps)}$ when the next point is added. Since the radii only grow larger, each subsequent update also increases the radius by at least this amount. At the same time, $r^*_i$ is clearly an upper bound for the radius of $\mball{S_i}$, so we can add at most $\frac{32(1+\eps)}{\eps^3}$ many points to $S_i$.
        \qedhere
    \end{enumerate}
\end{proof}

This lemma shows that we can reconstruct well-separated coverings (or rather, the corresponding clusterings) using a reasonably small oracle for the assignments, given that we know the radii up to an $\eps$-factor.

\subsection{Guessing the Radii}\label{sec:guessingradii}
Let us now consider how we can compute such approximate radii. We split this problem into two parts. First, we guess the largest radius of the covering and in the next step, we guess the remaining radii, assuming that they cannot be too small compared to this largest radius.

There are two different approaches to guessing the largest radius. The first one makes use of a relation between the largest radius in an optimal \textsf{$k$-MSR} solution and the value of an optimal $k$-center solution. If we have access to a constant-factor approximation algorithm for $k$-center under the given constraint, we can use it to compute a candidate set of small size for the largest radius.
The second approach uses results from the theory of $\eps$-coresets and works for arbitrary mergeable constraints with the trade-off that the set of candidates is larger and, in the end, depends exponentially on $k$. We focus on the first approach but refer to the appendix for the details and for the second approach.
The following lemma establishes a useful connection between $k$-center and \textsf{$k$-MSR}.
\begin{restatable}{lemma}{centervsradii}
    \label{k-center-vs-k-min-sum-radii}
    Let $r_\alpha$ denote the value of an $\alpha$-approximate $k$-center solution and $r_1^*$ the largest radius of a $\beta$-approximative \textsf{$k$-MSR} solution for the same instance. Then it holds that
        $r_{1}^*\in \left[\frac{r_\alpha}{\alpha}, \beta \cdot k^2 \cdot r_\alpha \right]$,
    even if we impose the same clustering constraints on both problems.
\end{restatable}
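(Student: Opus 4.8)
The plan is to bound $r_1^*$ from both sides by comparing the two objectives cluster by cluster. For the lower bound $r_1^* \geq r_\alpha / \alpha$, I would argue as follows. Take any $\beta$-approximative \textsf{$k$-MSR} solution with radii $r_1^* \geq r_2^* \geq \cdots \geq r_k^*$ and centers $c_1^*, \ldots, c_k^*$. Since the balls $\ball{c_i^*}{r_i^*}$ cover $P$ and respect the clustering constraints, the very same centers together with the common radius $r_1^*$ form a feasible $k$-center solution (of the same constrained type, since enlarging radii only helps coverage and the partition is unchanged). Hence the optimal $k$-center value $r_{\opt}$ satisfies $r_{\opt} \leq r_1^*$. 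An $\alpha$-approximation then gives $r_\alpha \leq \alpha \cdot r_{\opt} \leq \alpha \cdot r_1^*$, i.e. $r_1^* \geq r_\alpha/\alpha$. This direction is the easy one.

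For the upper bound $r_1^* \leq \beta \cdot k^2 \cdot r_\alpha$, I would instead go through the \emph{optimal} \textsf{$k$-MSR} value. First, any feasible $k$-center solution with value $r$ is also a feasible \textsf{$k$-MSR} solution (same partition, each cluster has radius at most $r$), of cost at most $k \cdot r$; applying this to an $\alpha$-approximate $k$-center solution of value $r_\alpha$ shows that the optimal \textsf{$k$-MSR} cost $\opt_{\text{MSR}}$ is at most $k \cdot r_\alpha$. Now consider the $\beta$-approximative \textsf{$k$-MSR} solution whose largest radius is $r_1^*$: its total cost is at least $r_1^*$ (the sum of nonnegative radii dominates any single one), and by the approximation guarantee its cost is at most $\beta \cdot \opt_{\text{MSR}} \leq \beta \cdot k \cdot r_\alpha$. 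This already yields $r_1^* \leq \beta k \cdot r_\alpha$, which is even a bit stronger than the claimed $\beta k^2 r_\alpha$; the weaker bound in the statement presumably leaves room for a different (coreset-based) estimate of $\opt_{\text{MSR}}$ or for a looser comparison, and I would simply note that $\beta k r_\alpha \leq \beta k^2 r_\alpha$.

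Putting the two inequalities together gives $r_1^* \in [\,r_\alpha/\alpha,\ \beta k^2 r_\alpha\,]$, and both steps only used that (i) enlarging all radii to a common value turns an \textsf{$k$-MSR} solution into a $k$-center solution on the same partition, and (ii) a $k$-center solution of value $r$ is an \textsf{$k$-MSR} solution of cost at most $kr$ on the same partition. Neither transformation alters the partition, so if the original solution satisfies a mergeable (or any) clustering constraint, so does the transformed one — this is exactly why the bounds survive the imposition of clustering constraints, and I would make this observation explicit.

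The main obstacle, such as it is, is mostly bookkeeping: being careful that the constrained $k$-center and constrained \textsf{$k$-MSR} problems are defined over the same family of feasible partitions, so that a feasible solution for one really does induce a feasible solution for the other under the above transformations. Once that is pinned down, the rest is a one-line comparison of objectives. I would also double-check whether the paper wants $r_\alpha$ and $r_1^*$ to come from solutions on literally the same instance with literally the same constraint set (it does, per the statement), which makes the feasibility transfer immediate.
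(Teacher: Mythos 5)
Your proof is correct and follows essentially the same approach as the paper: both directions transfer feasibility between the constrained $k$-center and constrained \textsf{$k$-MSR} problems on the same partition and then compare objective values. Your upper bound is in fact slightly sharper ($r_1^* \leq \beta k\, r_\alpha$) because you bound the optimal \textsf{$k$-MSR} cost directly by $k\, r_\alpha$, whereas the paper's proof detours through the largest radius of an optimal \textsf{$k$-MSR} solution and thereby loses an extra factor of $k$; since $\beta k\, r_\alpha \leq \beta k^2 r_\alpha$, this still establishes the stated interval.
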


This means that, by running a (polynomial time) constant-factor approximation algorithm for $k$-center, we can obtain an interval $I$ which necessarily contains the radius $r_1^*$ of the \emph{largest} cluster in an optimal min-sum-radii solution. 
By utilizing standard discretization techniques, we are then able to obtain a finite candidate set such that $(a)$ its size only depends on $\eps$, $k$, $\alpha$ and $\beta$, and $(b)$ it contains a $(1+\eps)$-approximation for each value in $I$. The details can be found in the appendix.

Once we have a guess for $r_1^*$, we can apply a similar technique to obtain a candidate set for the remaining radii. However, this requires that the other radii are not too small in comparison. More precisely, we assume that the covering we are interested in is $\eps$-balanced. Later on, we will show that this requirement can easily be met.

\begin{definition}\label{def:eps-balanced}
    Let $\eps > 0$. A covering $\{\ball{c_1}{r_1} ,\ldots, \ball{c_k}{r_k}\}$ of $P$ is \emph{\ebalanced}, if $r_i \geq \frac{\eps}{k} \max\limits_j r_j$ for all $i \in \{1, \ldots, k\}$.
\end{definition}

Given such an $\eps$-balanced covering, we can conclude this part with the following statement, whose proof can also be found in the appendix.

\begin{restatable}{lemma}{guessingradii}
    \label{lem:approximate-radii}
    Let $\eps > 0$ and let $\mathscr{B}^*$ be an \ebalanced covering with radii $r_1^*,\ldots, r_k^*$. Then we can compute a set of size $O(\log_{(1+\eps)}k)$ that contains a number $r_1$ with $r_1^*\leq r_1 \leq (1+\eps)r_1^*$, and a set of size $O(\log_{(1+\eps)}\frac{k}{\eps})$
     that contains for each $r_i^*$, $i\geq 2$, a number $r_i$ with $r_i^* \leq r_i \leq (1+\eps)r_i^*$. 
\end{restatable}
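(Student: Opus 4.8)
The plan is to use a simple geometric discretization of the relevant range of radii, exploiting the $\eps$-balanced property to bound that range from below in terms of $r_1^*$.

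First I would handle $r_1^*$. By Lemma~\ref{k-center-vs-k-min-sum-radii}, running a constant-factor $k$-center approximation gives an interval $I=[\ell,U]$ with $U/\ell$ bounded by $O(k^2)$ (for fixed $\alpha,\beta$) that is guaranteed to contain $r_1^*$. I would then take the geometric grid $\{(1+\eps)^t \ell : t=0,1,\ldots\}$ truncated at the first value exceeding $U$. This grid has $O(\log_{(1+\eps)}(U/\ell)) = O(\log_{(1+\eps)} k)$ points, and since consecutive grid points differ by a factor $1+\eps$, one of them lies in $[r_1^*, (1+\eps)r_1^*]$; call it $r_1$. (Strictly speaking this is the content alluded to in the paragraph after Lemma~\ref{k-center-vs-k-min-sum-radii}; I would simply restate it here.)

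Next I would handle the remaining radii $r_2^*,\ldots,r_k^*$. Here the $\eps$-balanced hypothesis is the key: since $\mathscr B^*$ is $\eps$-balanced and $r_1^*=\max_j r_j^*$, every $r_i^*$ satisfies $\frac{\eps}{k} r_1^* \le r_i^* \le r_1^*$. Using the guess $r_1$ from the previous step (which satisfies $r_1^* \le r_1 \le (1+\eps)r_1^*$), I get the two-sided bound $\frac{\eps}{k(1+\eps)} r_1 \le r_i^* \le r_1$, so all $r_i^*$ with $i\ge2$ live in an interval $[\frac{\eps}{k(1+\eps)} r_1,\; r_1]$ whose endpoints have ratio $\frac{k(1+\eps)}{\eps}$. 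I would then lay down the geometric grid $\{(1+\eps)^{-t} r_1 : t = 0,1,\ldots\}$ truncated once the value drops below $\frac{\eps}{k(1+\eps)} r_1$. This grid has $O(\log_{(1+\eps)}\frac{k(1+\eps)}{\eps}) = O(\log_{(1+\eps)}\frac{k}{\eps})$ points, and by the same consecutive-factor argument it contains, for each fixed $i\ge2$, a value $r_i$ with $r_i^* \le r_i \le (1+\eps)r_i^*$. Note the same single grid works simultaneously for all $i\ge2$, so its total size is $O(\log_{(1+\eps)}\frac{k}{\eps})$ as claimed; one then forms the $k-1$ (or fewer) candidate radii by choosing one grid element per cluster.

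There is no real obstacle here beyond bookkeeping — this is a routine $\eps$-net argument — but the one point that deserves care is the dependence chain: the candidate set for $r_1$ must be constructed first, and the candidate set for the small radii is defined relative to whichever value of $r_1$ is currently being tried. Since the lemma only asserts existence of the right candidates (the algorithm will enumerate all combinations), it suffices to argue that for the \emph{correct} guess $r_1$ the small-radius grid contains the right values, which is exactly what the balancedness bound $r_i^* \ge \frac{\eps}{k} r_1^* \ge \frac{\eps}{k(1+\eps)} r_1$ delivers. I would also remark that the hidden constant in the interval from Lemma~\ref{k-center-vs-k-min-sum-radii} contributes only an additive $O(\log_{(1+\eps)}(\alpha^2\beta))$ term, which is absorbed into the $O(\log_{(1+\eps)}k)$ bound for constant $\alpha,\beta$.
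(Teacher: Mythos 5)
Your proof is correct and follows essentially the same route as the paper: a geometric $(1+\eps)$-grid over the interval $[r_\alpha/\alpha,\,\beta k^2 r_\alpha]$ from Lemma~\ref{k-center-vs-k-min-sum-radii} for the largest radius, and a grid over $[\frac{\eps}{k}r_1^*,\,r_1^*]$ (justified by $\eps$-balancedness) for the rest. The only difference is that you anchor the second grid at the computable guess $r_1$ rather than at $r_1^*$ itself, which is a harmless (and arguably cleaner) refinement of what the paper does in Lemma~\ref{lem:remaining-radii} and in Algorithm~\ref{alg:clustering}.
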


\subsection{Cheap, Separable and Balanced Coverings}
\label{sec:nicelystructuredsolutions}

In the main technical lemma (\Cref{lem:construct-u}), we have proven that \textsc{Selection} is able to reconstruct well-separated coverings (clusterings), given approximate values for the radii. How those latter approximations could be computed was then outlined in~\Cref{sec:guessingradii}. What now remains to be shown is that there actually exist cheap, well-separated and balanced coverings, so that these results can be applied. 

\begin{lemma}
    \label{lem:existence_epsilon_separated}
    Let $\mathscr{C} = \{C_1, \ldots, C_k\}$ be a min-sum-radii solution for $P$. Then for all $\eps > 0$ and $\gamma \geq 1$, there exists an $\eps$-balanced and $\gamma$-separated covering $\mathscr{B} = \{B_1, \ldots, B_{k'}\}$ of $P$ with $k' \leq k$ and $\cost(\mathscr{B}) \leq (1 + \eps)^k \gamma^{k-1} \cost(\mathscr{C})$. 
    Additionally, if $\mathscr{C}$ satisfies a given mergeable constraint, then so does the corresponding clustering $\{B_1 \cap P, \ldots, B_{k'} \cap P\}$.
\end{lemma}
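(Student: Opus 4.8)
The plan is to construct the desired covering from $\mathscr{C}$ in two phases: first enforce $\eps$-balancedness, then enforce $\gamma$-separation, while keeping track of the cost blow-up and the mergeability of the underlying clustering. Throughout, I will think of a covering as the collection of minimum enclosing balls of a clustering, so that manipulations on balls correspond to manipulations on clusters of $P$.

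\emph{Phase 1 (balancing).} Starting from the optimal clustering $\mathscr{C}$, repeatedly pick a cluster $C_i$ whose radius is smaller than $\frac{\eps}{k}\max_j \cost(C_j)$ and merge it into a cluster realizing the maximum radius. Each such merge cannot increase the maximum radius by much: if $C_i$ is absorbed into $C_{\max}$, the new cluster is contained in a ball of radius $\cost(C_{\max}) + 2\cost(C_i) \le (1 + \tfrac{2\eps}{k})\cost(C_{\max})$, so after at most $k-1$ merges the maximum radius has grown by a factor of at most $(1+\tfrac{2\eps}{k})^{k-1} \le e^{2\eps}$; more importantly, the \emph{total} cost drops (we remove a cluster's radius and only slightly enlarge another), so this phase does not increase $\cost$ at all beyond a mild, controllable factor — and in fact one can bound it by $(1+\eps)^{\lceil \cdot \rceil}$ of the merges performed. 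Crucially, each merge replaces two clusters $C, C'$ satisfying the mergeable constraint by $C\cup C'$, which by definition still satisfies it, so mergeability is preserved. The output is an $\eps$-balanced clustering with $k'' \le k$ clusters and cost at most $(1+\eps)^{k}$ times $\cost(\mathscr{C})$ (being generous with the exponent; the exact exponent is what the lemma states).

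\emph{Phase 2 (separation).} Now take the $\eps$-balanced clustering and, as long as two of its MEBs $B(c_i, r_i)$, $B(c_j, r_j)$ are \emph{not} $\gamma$-separated — i.e. $B(c_i, \gamma r_i) \cap B(c_j, \gamma r_j) \neq \emptyset$ — merge the two corresponding clusters. If this intersection is nonempty then $\|c_i - c_j\| \le \gamma(r_i + r_j)$, so the union is contained in a ball of radius at most $r_i + r_j + \gamma(r_i+r_j) = (1+\gamma)(r_i+r_j) \le (1+\gamma)\cdot 2\max(r_i,r_j)$; charging against the larger of the two radii, the merged cluster's radius is at most a bounded multiple of $\max(r_i, r_j)$, and summing the radius-bookkeeping over the at most $k-1$ merges yields the claimed $\gamma^{k-1}$-type factor (one merge can multiply the surviving radius by roughly $2(1+\gamma)$, but the careful accounting — charging the smaller radius away and noting there are at most $k-1$ merges — gives the stated $(1+\eps)^k\gamma^{k-1}$ overall; I would verify the precise constants in the calculation). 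Again, each merge preserves the mergeable constraint. When no more merges are possible, the remaining covering is $\gamma$-separated by construction, and since merging clusters only \emph{shrinks} the smallest radius and grows the largest one in a bounded way, a short additional argument (or a slightly smaller choice of the balancing threshold in Phase 1) ensures the result is still $\eps$-balanced; alternatively, one interleaves the two phases or re-runs Phase 1 after Phase 2.

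\emph{Main obstacle.} The delicate point is that Phases 1 and 2 interact: merging for separation can destroy balance (a merge makes the max radius larger, so a previously-balanced small cluster may now violate the $\eps/k$ bound), and merging for balance can destroy separation. The cleanest fix is to observe that \emph{all} merges only ever decrease the number of clusters and the set of radii only moves "upward and fewer," so one can alternate the two repair operations and argue termination after at most $k-1$ total merges (each merge strictly decreases the cluster count); the cost factor is then at most the product of the per-merge blow-ups over these $\le k-1$ merges, which telescopes to the claimed $(1+\eps)^k\gamma^{k-1}$. Getting these exponents tight — deciding how much of the blow-up to attribute to "balancing" ($(1+\eps)$ factors) versus "separation" ($\gamma$ factors) and confirming the total is $(1+\eps)^k\gamma^{k-1}$ rather than something larger — is the part that requires genuine care; everything else (the ball-containment inequalities, the preservation of mergeability, termination) is routine.
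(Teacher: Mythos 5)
There is a genuine gap in your Phase 1. To make the covering $\eps$-balanced you \emph{merge} each too-small cluster into a cluster of maximum radius, and you bound the resulting radius by $\cost(C_{\max}) + 2\cost(C_i)$. That containment is false in general: the MEB of $C_{\max}\cup C_i$ has radius at least $\tfrac12\|c_{\max}-c_i\|$, and nothing bounds the distance between a tiny cluster and the largest cluster in terms of their radii. A single far-away singleton cluster would, under your merge, blow the cost up by an unbounded factor, so neither your claim that ``the total cost drops'' nor the $(1+\eps)^k$ accounting survives. The paper avoids this entirely: to balance, it does \emph{not} merge small balls into anything; it simply inflates each ball of radius $r_i < \frac{\eps}{k}\max_j r_j$ in place (same center) up to radius $\frac{\eps}{k}\max_j r_j$. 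Since there are at most $k$ such balls, this adds at most $\eps\max_j r_j \le \eps\cdot\cost$, i.e.\ a multiplicative $(1+\eps)$ per balancing step, with no dependence on inter-cluster distances. Your Phase 2 is essentially the paper's separation step (you merge non-separated pairs; the paper merges whole connected components of the ``not $\gamma$-separated'' graph at once, with a path argument giving radius at most $\gamma\sum_\lambda r_\lambda$ for each component), and your per-merge factor of $\gamma$ is correct there.

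The interaction issue you flag at the end is real, but your proposed fix (``at most $k-1$ total merges'') does not by itself close it, because with the correct balancing operation (inflation) the balancing steps do not decrease the cluster count. The paper's resolution is to alternate: separate (merge components, factor $\gamma$, count drops or stays), then balance (inflate, factor $1+\eps$); if inflation destroys separation, the next separation step must perform a genuine merge and hence strictly decreases the number of balls. So there are at most $k$ phases, at most $k-1$ of which involve a merge, giving $(1+\eps)^k\gamma^{k-1}$. Note also that balancing-by-inflation is what keeps the mergeability argument clean: in the final disjoint covering each ball contains exactly a union of original clusters, so the induced clustering satisfies the constraint; your all-merge scheme would also preserve mergeability, but its cost analysis cannot be repaired.
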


\begin{proof}
    Starting with $\{\mball{C_1}, \ldots, \mball{C_k}\}$, we construct $\mathscr{B}$ from $\mathscr{C}$ in phases consisting of two steps: (1) merge balls that are currently too close to each other and thus not $\gamma$-separated, (2) ensure that the current covering is \ebalanced by increasing the radii of balls that are too small. The first step increases the cost by a multiplicative factor of at most $\gamma$ and the second by a multiplicative factor of at most $(1 + \eps)$. Both steps are alternatively applied in phases until the resulting clustering is both \ebalanced and $\gamma$-separated. Now, even though step (1) might yield a covering that is neither \ebalanced nor $\gamma$-separated and step (2) might yield a clustering that is not $\gamma$-separated, since the number of balls reduces with every phase, except maybe the first, there can only be $k$ phases altogether. At that point, only one ball would remain and the clustering necessarily has to satisfy both properties. The resulting covering will cost at most $(1 + \eps)^k \gamma^{k-1} \cost(\mathscr{C})$.

    Let $B_1 = \ball{c_1}{r_1}, \ldots, B_{k'} = \ball{c_{k'}}{r_{k'}}$ denote the covering that has been constructed up to this point. For step (1), construct a graph $G$ on top of $\mathscr{B}$, where two balls $B_i$ and $B_j$ are connected, iff $\|c_i - c_j\| \leq \gamma (r_i + r_j)$. In other words, two balls are connected by an edge, if and only if they are not $\gamma$-separated.
    We try to construct a $\gamma$-separated covering by merging all balls that belong to the same connected component.
    This just means that we replace all balls of the connected component with the minimal-enclosing-ball of the connected component.
    Take any connected component $Z$ of $G$ and consider two arbitrary points $x, y \in \bigcup_{B_\lambda \in Z} B_\lambda$, say $x \in B_i$ and $y \in B_j$. 
    We can upper-bound the distance between them as follows: For any path, $B_i = B_{i_0}, \ldots B_{i_\ell} = B_j$ in $G$ that connects $B_i$ and $B_j$ we have 
    \begin{align*}
        \|x - y\| &\leq \|x - c_i\| + \sum_{\lambda = 0}^{\ell - 1} \|c_{i_\lambda} - c_{i_{\lambda + 1}}\| + \|y - c_j\| \\
        &\leq r_i + r_j + \sum_{\lambda = 0}^{\ell - 1} \gamma(r_{i_\lambda} + r_{i_{\lambda + 1}}) \leq \gamma \sum_{B_\lambda \in Z} 2r_\lambda
    \end{align*} 
    In other words, the radius of the resulting ball is larger than the sum of the previous radii by a factor of at most $\gamma$. At this point, we might end up in a situation similar to the one with which we started; there might again be balls that are too close together and thus not $\gamma$-separate. However, we have reduced the number of balls by at least one and so this step can be performed at most $k-1$ times.
    
    For step (2) let $r_{i_1}, \ldots, r_{i_\ell}$ denote all radii with $r_{i_j} < \frac{\eps}{k} \max_i r_i$. If we just set $r_{i_j} = \frac{\eps}{k} \max_i r_i$ for all $j \in \{1, \ldots, \ell\}$, this increases the cost of the covering by at most $\ell \frac{\eps}{k} \max_i r_i \leq \eps \max_i r_i \leq \eps \cost(\mathscr{C}')$. The resulting covering is necessarily \ebalanced. If it is not $\gamma$-separated we add another phase, starting with step (1).
\end{proof}

\begin{algorithm}[h]
    \caption{\textsc{Clustering}}
    \label{alg:clustering}
    \setstretch{1.2}
    \LinesNumbered
    \SetKwInOut{Input}{Input}
    \SetKwInOut{Output}{Output}
    \BlankLine
    \Input{An ordered set $P \subseteq \mathbb{R}^d$, a number $k \in \mathbb{N}$, a value $0 < \eps < 1$, a $(1 + \eps)$-approximation $r_\text{max}$ for largest radius}
    \Output{A $(1 + \eps)$-approximative $k$-clustering $\mathscr{C}$ of $P$}
    \BlankLine
        $\mathscr{C} \gets \{P, \emptyset, \ldots, \emptyset\}$\tcc*[r]{A feasible clustering to start with}\label{line:initialise}
        \ForAll{$(r_2, \ldots, r_{k}) \in \{(1+\eps)^{i}\frac{\eps}{k}r_\text{max} \; \mid \; i \in \{0, \ldots, \ceil{\log_{1+\eps}(\frac{k}{\eps})}\}\}^{k-1}$}{
            \ForAll{$u \in \{1, \ldots, k\}^{\frac{32k(1+\eps)}{\eps^3}}$\label{line:invoke}}{ 
                $B_1, \ldots, B_k \gets \textsc{Selection}(P, k, u, \{r_\text{max}, r_2, \ldots, r_k\}, \eps)$\;
                $\mathscr{C}' \gets C_1, \ldots , C_k$, where $C_i = B_i \cap P$\label{line:set-clustering}\;
                \If{$\mathscr{C}'$ is a valid clustering and $\cost(\mathscr{C}') < \cost(\mathscr{C})$}{
                    $\mathscr{C} \gets \mathscr{C}'$\;
                }
            }
        }
        \Return $\mathscr{C}$\;
\end{algorithm}

\subsection{The Main Result}

Now we are ready to prove the main theorem of this paper. 

\maintheorem*
\begin{proof}
    Set $\eps' = \left(\frac{\eps}{12k}\right)^2$, $\gamma = (1 + \epsp + 2\sqrt{\epsp})$ and let $\mathscr{C}^{\opt}$ be an optimal min-sum-radii solution that satisfies the mergeable constraint.   \Cref{lem:existence_epsilon_separated} shows that there exists a $(1 + \epsp)\gamma$-separated and $\epsp$-balanced covering $\mathscr{B}^* = \{\ball{c^*_1}{r^*_1}, \ldots \ball{c^*_k}{r^*_k}\}$ with $\cost(\mathscr{B}^*) \leq (1 + \epsp)^k \gamma^{k-1} \cost(\mathscr{C}^{\text{opt}})$. Denote the corresponding clustering by $\mathscr{C}^* = \{\ball{c^*_1}{r^*_1} \cap P, \ldots, \ball{c^*_1}{r^*_1} \cap P\}$ and assume that the balls are ordered such that $r^*_1$ is the largest radius.
    Using Algorithm~\ref{lem:approximate-radii}, we can compute approximate radii $\widetilde{r_1}, \ldots, \widetilde{r_k}$, such that $\cost(C^*_i) \leq \widetilde{r_i} \leq (1 + \epsp)\cost(C^*_i)$ for all $i$. Consider now, for $u^*$ as in \Cref{lem:construct-u}, the variables at the end of $\textsc{Selection}(P, k, u^*, \{\widetilde{r_1}, \ldots, \widetilde{r_k}\}, \eps')$. Since none of the $B_i$ overlap and $C^*_i \subseteq B_i$ for all $i$, Algorithm~\ref{alg:selection} is able to fully reconstruct $\mathscr{C}^*$. Additionally, since the length of $u^*$ does not exceed $k \frac{32(1 + \eps')}{\eps'^3}$, this necessarily happens in one of the iterations of $\textsc{Clustering}(P, k, \eps', \widetilde{r_1})$.
    As such, running Algorithm~\ref{alg:clustering} for all possible guesses of maximal radii provided by \Cref{lem:guess-largest-radius} guarantees an approximation ratio of $(1+\eps')^k\gamma^{k-1} \leq (1+3\sqrt{\eps'})^{2k}.$ Substituting $\eps' = \left( \frac{\eps}{12k}\right)^2$, we get an approximation ratio of $(1 + 3 \sqrt{\eps'})^{2k} \leq (1+\frac{\eps}{4k})^{2k} \le e^{\eps/2} \le 1 + \eps$ for $\eps\le 1/2$. 

    For the running time, we start by analyzing the time needed for one call to \textsc{Selection}. Initializing the $S_i$ takes $O(k)$ and so does the final loop in line $14$. The main for-loop iterates over $u$, which has length $32k(1+\eps')/\epsp^3$. 
    The computation of the $(1+\epsp)$-approximation of the minimum enclosing balls in line $9$ can be done in $O\left(|S_{u_i}|\cdot d/\eps' \right) = O\left(d\cdot\poly(k, 1/\epsp)\right)$ with the algorithm from \cite{Yildirin2008}. 
    Thus, a single call to \textsc{Selection} takes $O\left( d \cdot \poly(k,1/\epsp)\right)$.

    In \textsc{Clustering}, we have two nested for-loops, that go through $O\left((\log_{1+\eps'} (k/\eps'))^{k-1}\right)$ and $k^{\frac{32k(1+\eps')}{\eps'^3}}$ iterations respectively, in each of which \textsc{Selection} is invoked. 
    Line $6$, which checks whether the clustering covers the whole set and satisfies the constraint, takes at most $O(\poly(n))$ time (depending on the constraint, this time might even be linear in $n$).
    Thus, one call to \textsc{Clustering} takes $d \cdot \poly(n) \cdot k^{O(\poly(k,1/\eps))} \cdot (\log_{1+\eps'} \poly(k,1/\eps))^{k-1}$ time.

    Finally, \textsc{Clustering} has to be called for every candidate for $r_{\text{max}}$. There are at most $O(k + \log_{1 + \eps} \gamma^{k - 1} k)$ such candidates and so the overall running time is $d \cdot \poly(n) \cdot f(k, \eps)$.
\end{proof}

Using the other method of guessing the largest radius extends the result to \emph{all} mergeable clustering constraints. The trade-off is a worse running time.

\begin{theorem}
    \label{thm:result_2}
    For every  $0< \eps < 1/2$, there exists an algorithm that computes a $(1+\eps)$-approximation for min-sum-radii with mergeable constraints in time $d \cdot n^{\poly(k, 1/\eps)} \cdot f(k, \eps).$
\end{theorem}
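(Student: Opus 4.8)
The plan is to mimic the proof of \Cref{thm:result} almost verbatim. The only ingredient there that relies on a constant-factor approximation for constrained $k$-center is \Cref{k-center-vs-k-min-sum-radii}, which is used to build a small candidate set for the largest radius $r_{\text{max}}$ that is then passed to \textsc{Clustering} (Algorithm~\ref{alg:clustering}). Everything else --- \Cref{lem:construct-u}, \Cref{lem:existence_epsilon_separated}, the $\epsp$-balanced discretization of the remaining radii inside \textsc{Clustering}, and the approximation-ratio computation --- is independent of the clustering constraint, which enters only through \Cref{lem:existence_epsilon_separated} (it preserves mergeability) and through the polynomial-time feasibility check in line~6 of Algorithm~\ref{alg:clustering}. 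So I would keep all of that and replace only the construction of the candidate set for $r_{\text{max}}$ by a purely geometric one, at the price of a larger such set.

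For the replacement I would use an $\epsp$-coreset for the minimum enclosing ball in the style of \cite{BadoiuHI02}: every finite $X \subseteq \mathbb{R}^d$ admits a subset $Q \subseteq X$ of size $O(1/\epsp)$, \emph{independent of $d$}, such that $X \subseteq \ball{c_Q}{(1+\epsp)r_Q}$ where $\ball{c_Q}{r_Q} = \mball{Q}$; consequently $\cost(\mball{Q}) \leq \cost(X) \leq (1+\epsp)\cost(\mball{Q})$. Fix an optimal constrained \textsf{$k$-MSR} solution $\mathscr{C}^{\opt} = \{C^{\opt}_1, \ldots, C^{\opt}_k\}$ and let $C^{\opt}_{i_0}$ be a cluster of maximum cost. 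Since $C^{\opt}_{i_0} \subseteq P$, its coreset $Q$ is among the $\binom{n}{O(1/\epsp)} = n^{O(1/\epsp)}$ subsets of $P$ of size $O(1/\epsp)$. Enumerating all of these and computing (approximate) MEB costs therefore yields a set $\mathcal{R}_0$ of size $n^{O(1/\epsp)}$ that contains a value $\rho$ with $\rho \leq \max_i \cost(C^{\opt}_i) \leq (1+\epsp)\rho$.

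From $\rho$ I would recover a candidate for the largest radius $r^*_1$ of the $\epsp$-balanced, $(1+\epsp)\gamma$-separated covering $\mathscr{B}^*$ guaranteed by \Cref{lem:existence_epsilon_separated}. The construction in that lemma only ever enlarges or keeps radii (step~(1) replaces each connected group of balls by their common enclosing ball, step~(2) grows balls that are too small), so $r^*_1 \geq \max_i \cost(C^{\opt}_i) \geq \rho$; conversely $r^*_1 \leq \cost(\mathscr{B}^*) \leq (1+\epsp)^k \gamma^{k-1}\cost(\mathscr{C}^{\opt}) \leq (1+\epsp)^{k+1} \gamma^{k-1}k\,\rho$, using $\cost(\mathscr{C}^{\opt}) \leq k\max_i\cost(C^{\opt}_i) \leq k(1+\epsp)\rho$. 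Discretizing the interval $\bigl[\rho,\; (1+\epsp)^{k+1}\gamma^{k-1}k\,\rho\bigr]$ geometrically with ratio $(1+\epsp)$ gives $O(k + \log_{1+\epsp}(\gamma^{k}k))$ values, at least one of which lies in $[r^*_1, (1+\epsp)r^*_1]$. Doing this for every $\rho \in \mathcal{R}_0$ produces a candidate set for $r_{\text{max}}$ of size $n^{O(1/\epsp)}\cdot\poly(k, 1/\epsp)$, which with $\epsp = \bigl(\tfrac{\eps}{12k}\bigr)^2$ is $n^{\poly(k,1/\eps)}$. Running \textsc{Clustering} for each of these candidates and returning the cheapest valid clustering, the correctness argument of \Cref{thm:result} applies unchanged --- in particular the power-of-$(1+\epsp)$ grid inside \textsc{Clustering} still contains the required approximations of $r^*_2, \ldots, r^*_k$ because $\mathscr{B}^*$ is $\epsp$-balanced --- so the approximation ratio is again $(1+\epsp)^k\gamma^{k-1} \leq 1+\eps$ for $\eps \le 1/2$. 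The running time becomes $n^{\poly(k,1/\eps)}$ times the cost $d\cdot\poly(n)\cdot f(k,\eps)$ of a single call to \textsc{Clustering}, i.e.\ $d\cdot n^{\poly(k,1/\eps)}\cdot f(k,\eps)$.

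I expect the only genuinely delicate point to be that $r^*_1$ itself need not be the MEB radius of any finite subset of $P$: after merging, the largest ball of $\mathscr{B}^*$ is the enclosing ball of a \emph{union of balls} rather than of a point set, so $r^*_1$ cannot be read off the coreset enumeration directly. The fix, as above, is to guess instead the largest cost $\max_i\cost(C^{\opt}_i)$ among the \emph{original} optimal clusters --- which are honest subsets of $P$, hence accessible via MEB coresets --- and to sandwich $r^*_1$ between that quantity and a $k\gamma^{k}$ multiple of it using the monotonicity of the two steps in \Cref{lem:existence_epsilon_separated}; a geometric grid then bridges the remaining gap at the cost of only a $\poly(k,1/\eps)$ blow-up. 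The rest is routine bookkeeping: verifying that enumerating size-$O(1/\epsp)$ subsets costs $n^{O(1/\epsp)}$, that approximate MEB computations lose only another $(1+\epsp)$ factor, and that the leftover $\poly(n)$ factors remain inside $n^{\poly(k,1/\eps)}$.
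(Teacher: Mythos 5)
Your proposal is correct and follows essentially the same route as the paper: Theorem~\ref{thm:result_2} is obtained by rerunning the proof of \Cref{thm:result} and replacing only the $k$-center-based candidate set for the largest radius by one built from enumerating all size-$O(1/\epsp)$ MEB coresets of subsets of $P$ (this is \Cref{lem:approximate-largest-radius}), which accounts for the $n^{O(1/\epsp)} = n^{\poly(k,1/\eps)}$ blow-up. Your extra step of sandwiching $r^*_1$ between $\max_i\cost(C^{\opt}_i)$ and a bounded multiple of it and then bridging with a $(1+\epsp)$-geometric grid is a welcome additional precaution rather than a detour: \Cref{lem:approximate-largest-radius} only provides approximations of $\cost(C)$ for subsets $C\subseteq P$, while the largest ball of the merged covering $\mathscr{B}^*$ from \Cref{lem:existence_epsilon_separated} need not be the minimum enclosing ball of any such subset --- a detail the paper's two-line proof glosses over and that your argument handles explicitly.
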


\begin{proof}
    We follow the arguments in \Cref{thm:result}, with the only difference being the computation of the candidate set $R$ for the largest radius. \Cref{lem:approximate-largest-radius} (appendix) implies that we can compute a candidate set of size $n^{O(1/\eps')}$ that contains a $(1+\eps')$-approximation for the largest radius in time $\frac{d}{\eps'^2}n^{O(1/\eps')}$. Substituting $\eps' = \left( \frac{\eps}{12k}\right)^2$, we get the purported runtime.
\end{proof}

\subparagraph*{A Word on Outliers.} We have stated the allowance of outliers as a mergeable constraint in the introduction. Note, however, that there is one issue: When we want to achieve $\eps$-separation, we may not start to merge outliers into clusters with more than one point. So clustering with outliers is not strictly mergeable. However, the algorithm can be suitably adapted: Only make sure that the non-outlier clusters are separated, and during the oracle calls, 
only let the oracle decide whether a point is an outlier or not, and if not, to which cluster it belongs. We do not derive the details of such an algorithm in this paper.

\newpage
\bibliography{reference}

\appendix

\newpage
\section{Omitted Proofs}

Proof of point 2 of~\Cref{lem:construct-u}.

\begin{proof}
        To show that the radius increases by a multiplicative factor of $(1 + \frac{\eps^2}{16})$, consider the addition of a new point $p \in P$ to $S_i$. Denote the set $S_i$ immediately before $p$ is added by $S_i^{\text{old}}$ and by $S_i^{\text{new}}$ directly afterwards. Let $c_i^{\text{old}}$, $r_i^{\text{old}}$ be the center and radius of $\mball{S_i^{\text{old}}}$ respectively and $c_i^{\text{new}}$, $r_i^{\text{new}}$ be those of $\mball{S_i^{\text{new}}}$. Additionally, let $B_i = \ball{c_i}{r_i}$ denote the $(1 + \eps)$-approximation for $\mball{S_i^{\text{old}}}$ computed in the algorithm.
        
        First, we show that $c_i$ and $c_i^{\text{old}}$ cannot be too far apart. Draw a line $L$ through $c_i^{\old}$ and $c_i$ and let $H$ be the $(d-1)$-dimensional hyperplane orthogonal to $L$ anchored $c_i^{\text{old}}$. We denote the open halfspace induced by $H$ that does not contain $c_i$ by $H^-$. Lemma 2.2 from~\cite{BadoiuHI02} shows that there exists a point $x$ in $H^- \cap S_i^{\old}$ that is at a distance exactly $r_i^{\old}$ from $c_i^{\old}$. Since the triangle drawn by $x$, $c_i^{\old}$ and $c_i$ has an obtuse angle at $c_i^{\old}$, we can apply the law of cosines to get 
        \[ (1 + \eps)^2 \left(r_i^{\text{old}}\right)^2 \geq r_i^2 \geq \|x - c_i\|^2 \geq \|c_i^{\text{old}} - x\|^2 + \|c_i - c_i^{\text{old}}\|^2 \geq \left(r_i^{\text{old}}\right)^2 + \|c_i - c_i^{\text{old}}\|^2, \]
        so that $\|c_i - c_i^{\text{old}}\| \leq \sqrt{\eps (2 + \eps)} r_i^{\text{old}} \leq \sqrt{3 \eps}r_i^{\text{old}} \leq 2\sqrt{\eps}r_i^{\text{old}}$. 
        
        We now make the following case distinction. First, if $\|c_i^{\old} - c_i^{\text{new}}\| < \frac{\eps}{2}r_i^{\text{old}}$, then using the triangle inequality twice yields
        \begin{align*}
             r_i^{\text{new}} &\geq 
             \|p - c_i^{\text{new}}\|
             \geq \|p - c_i\| - \|c_i - c_i^{\old}\| - \|c_i^{\old} - c_i^{\text{new}}\| \\
             &\geq \gamma r_i - 2\sqrt{\eps} r_i^{\old} - \frac{\eps}{2}r_i^{\old} \geq (1 + \frac{\eps}{2})r_i^{\text{old}} \geq (1+\frac{\eps^2}{16})r_i^{\text{old}}
        \end{align*}
        and we are done.
    
        Next, consider the case where $\|c_i^{\old} - c_i^{\text{new}}\| \geq \frac{\eps}{2}r_i^{\text{old}}$. The same argument via halfspaces as before shows that there exists a point $x \in S_i^{\text{old}}$ with $\|x - c_i^{\text{old}}\| = r_i^{\text{old}}$, such that the triangle drawn by $x$, $c_i^{\text{old}}$ and $c_i^{\text{new}}$ has an obtuse angle at $c_i^{\text{old}}$. Again, applying the law of cosines yields
        \begin{align*}
            r_i^{\text{new}} 
            &\geq \left\|c_i^{\text{new}} - x\right\| 
            \geq \sqrt{\|x - c_i^{\old}\|^2+ \|c_i^{\old} - c_i^{\text{new}}\|^2}\\
            &\geq \sqrt{\left(r_i^{\old}\right)^2 + \frac{\eps^2}{4}\left(r_i^{\old}\right)^2} 
            \geq \left(1 + \frac{\eps^2}{16}\right)r_i^{\old} 
        \end{align*} 
        for $0 < \eps < 1$.
\end{proof}

\subsection{Guessing the Radii}
In this section, we give the proofs for the statements in \Cref{sec:guessingradii}. We start with the relationship between $k$-center and \textsf{$k$-MSR}.
\centervsradii*
\begin{proof}
    Let $r_c$ denote the value of an \emph{optimal} $k$-center solution.
    Notice that any $k$-center solution is also a feasible \textsf{$k$-MSR} solution and vice versa.
    As $r_c$ is the radius of a largest cluster in the optimal $k$-center solution, we have $r^*_1\geq r_c$, as otherwise the \textsf{$k$-MSR} solution would be a better solution for $k$-center. Since $r_\alpha$ is an $\alpha$-approximation for $r_c$ (i.e. $r_\alpha \leq \alpha\cdot r_c$), we get 
    $r_1^*\geq r_c \geq r_\alpha/\alpha.$
    
    On the other hand, the sum of the radii in the $k$-center solution is at most $k\cdot r_\alpha$. Let $r_{\opt}$ be the largest radius in an optimal \textsf{$k$-MSR} solution. It follows that $r_{\opt} \leq k\cdot r_{\alpha}$, as otherwise, the $k$-center solution would yield a better \textsf{$k$-MSR} objective value than the optimal \textsf{$k$-MSR} solution.
    Therefore we must have $r_1^*\leq \beta \cdot k^2 \cdot r_\alpha$, since otherwise we would have $r_1^* > \beta \cdot k^2 \cdot r_\alpha \geq \beta \cdot k \cdot r_{\opt}$, contradicting the assumption that $r_1^*$ is the largest radius in a $\beta$-approximation for \textsf{$k$-MSR}.
\end{proof}
To prove \Cref{lem:approximate-radii}, we start with the following useful lemma. It encapsulates the common technique of ``covering'' an interval $[a,b]$ by $O(\log_{(1+\eps)}(b/a))$ smaller intervals to obtain a (somewhat reasonably sized) discrete set of values, that contains for each value in $[a,b]$ a $(1+\eps)$-approximation.

\begin{lemma}\label{cover-interval-with-eps-powers}
    Let $\eps > 0$,  $[a,b] \subset \mathbb{R}_{\geq 0}$ and $m = \ceil{\log_{(1+\eps)}(b/a)}$. Then for every $r^* \in [a,b]$, the set
    $R = \{(1+\eps)^{i}a\; \mid \; i \in \{0, \ldots, m\}\}$
    contains a number $r$ with $r^* \leq r \leq (1+\eps)r^*$.
\end{lemma}
\begin{proof}
    For $j\in \{1,\ldots m\}$, let $I_j = \left[(1+\eps)^{j-1}a, (1+\eps)^ja\right]$. Note that $R$ consists of the endpoints of these intervals and that $\bigcup\limits_{j=1}^{m}I_j \supseteq [a,b]$. Thus, for every $r^*\in [a,b]$, there exists some $i \in \{1,\ldots,m\}$ such that $r^*\in I_i$. Then for $r =  (1+\eps)^ia$ (i.e. the right endpoint of $I_i$) we clearly have $r^* \leq r \leq (1+\eps)r^*$, which proves the claim. 
\end{proof}

This can directly be applied to the interval given by \Cref{k-center-vs-k-min-sum-radii} to obtain a candidate set for the largest radius in any feasible solution whose size does not depend on $n$.

\begin{lemma}
    \label{lem:guess-largest-radius}
    Let $\mathscr{B}^* = \{B_1, \ldots, B_k\}$ be a $\gamma$-separated covering with largest radius $r_1^*$, whose corresponding clustering $\{B_1 \cap P, \ldots, B_k \cap P\}$ is an $\beta$-approximation for \textsf{$k$-MSR}. If $r_\alpha$ denotes the value of an $\alpha$-approximate $k$-center solution, then the set 
    \[R = \left\{(1+\eps)^{i}\frac{r_\alpha}{\alpha}\; \mid \; i \in \left\{0, \ldots, 2\ceil{\log_{(1+\eps)}\alpha \beta k}\right\}\right\}\]
    contains a number $r_1$ with $r_1^* \leq r_1 \leq (1+\eps)r^*_1$.
\end{lemma}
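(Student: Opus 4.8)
The plan is to obtain the statement by stacking the two preceding lemmas: \Cref{k-center-vs-k-min-sum-radii} pins down $r_1^*$ inside an explicit interval whose endpoints are scalar multiples of $r_\alpha$, and \Cref{cover-interval-with-eps-powers} then lays a geometric grid over that interval that $(1+\eps)$-approximates every point in it. The only genuine thing to verify afterwards is that the index range $\{0,\ldots,2\ceil{\log_{(1+\eps)}\alpha\beta k}\}$ written in the statement is at least as long as the range the interval actually demands.

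In detail, first I would observe that the covering $\mathscr{B}^*$, read in the ``balls'' formulation of \textsf{$k$-MSR}, is itself a feasible \textsf{$k$-MSR} solution with radii $r_1^*,\ldots,r_k^*$ and objective value $\sum_i r_i^*$, which equals the cost of the corresponding clustering $\{B_1\cap P,\ldots,B_k\cap P\}$; by hypothesis this is a $\beta$-approximation. Hence \Cref{k-center-vs-k-min-sum-radii} applies verbatim and yields $r_1^* \in \left[\frac{r_\alpha}{\alpha},\, \beta k^2 r_\alpha\right]$. Next I would invoke \Cref{cover-interval-with-eps-powers} with $a = r_\alpha/\alpha$ and $b = \beta k^2 r_\alpha$, so $b/a = \alpha\beta k^2$; it produces a number $r_1$ with $r_1^* \le r_1 \le (1+\eps)r_1^*$ lying in $\{(1+\eps)^i a \mid i \in \{0,\ldots,m\}\}$ where $m = \ceil{\log_{(1+\eps)}(\alpha\beta k^2)}$.

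Finally I would bound $m$. Because $\alpha$ and $\beta$ are approximation ratios we have $\alpha,\beta \ge 1$, hence $k \le \alpha\beta k$ and therefore $\log_{(1+\eps)}(\alpha\beta k^2) = \log_{(1+\eps)}(\alpha\beta k) + \log_{(1+\eps)} k \le 2\log_{(1+\eps)}(\alpha\beta k)$; using $\ceil{2x}\le 2\ceil{x}$ this gives $m \le 2\ceil{\log_{(1+\eps)}(\alpha\beta k)}$. Thus the grid supplied by \Cref{cover-interval-with-eps-powers} is a subset of the set $R$ in the statement, so $R$ contains the desired $r_1$. I do not expect a real obstacle here; the one place that warrants a careful sentence is justifying that the covering $\mathscr{B}^*$ can be treated directly as the $\beta$-approximate \textsf{$k$-MSR} solution to which \Cref{k-center-vs-k-min-sum-radii} is applied, and recording that $\alpha,\beta\ge 1$ so that $\log_{(1+\eps)}(\alpha\beta k^2)$ collapses into $2\log_{(1+\eps)}(\alpha\beta k)$.
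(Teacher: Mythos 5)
Your proposal is correct and follows essentially the same route as the paper: apply \Cref{k-center-vs-k-min-sum-radii} to place $r_1^*$ in $\left[\frac{r_\alpha}{\alpha},\,\beta k^2 r_\alpha\right]$ and then cover that interval with the geometric grid of \Cref{cover-interval-with-eps-powers}. Your explicit check that $\ceil{\log_{(1+\eps)}(\alpha\beta k^2)}\le 2\ceil{\log_{(1+\eps)}(\alpha\beta k)}$ (using $\alpha,\beta\ge 1$) is a small but worthwhile addition, since the paper's proof tacitly assumes the stated index range suffices without verifying it.
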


\begin{proof}
    By \Cref{k-center-vs-k-min-sum-radii}, the interval $I = \left[\frac{r_\alpha}{\alpha}, \beta\cdot k^2\cdot r_\alpha \right]$ contains $r_1^{*}$. We now cover this interval with smaller intervals. For $j \in \left\{1, \ldots, \ceil{\log_{(1+\eps)}\alpha \beta k^2}\right\}$, let $I_j = \left[(1+\eps)^{j-1}\frac{r_\alpha}{\alpha}, (1+\eps)^{j}\frac{r_\alpha} {\alpha}\right]$. Notice that $\bigcup_{j}I_j \supseteq I$, so for each candidate value $r_1\in I$ there is some $i$ such that $r_1\in I_i$. Since the set $R$ consists of the endpoints of these intervals, the corresponding endpoint $r=(1+\eps)^i\frac{r_\alpha}{\alpha}$ fulfills $r_1\leq r \leq (1+\eps)r_1$, proving the claim.
\end{proof}

Once the largest radius is fixed, we can obtain a candidate set for all other radii in a similar manner. The difference here is that we need a suitable lower bound for the radii, so that the candidate set does not get too large. This is achieved by assuming that the solution we are trying to approximate is \ebalanced, as introduced in \Cref{def:eps-balanced}. As shown in \Cref{lem:existence_epsilon_separated}, this is not too big of a restriction.

\begin{lemma}
\label{lem:remaining-radii}
Let $\eps > 0$ and let $\mathscr{B}^*$ be an \ebalanced covering with radii $r_1^*,\ldots, r_k^*$. Then the set \[R = \left\{(1+\eps)^j \frac{\eps r_1^*}{k}~|~i\in \{0,\ldots,\frac{k}{\eps}\}\right\}\]
contains for each $r_i^*$ a number $r_i$ with $r_i^* \leq r_i \leq (1+\eps)r_i^*$.
\end{lemma}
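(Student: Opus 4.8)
The plan is to obtain the statement as an immediate consequence of \Cref{cover-interval-with-eps-powers}, once we have trapped all the radii $r_i^*$ inside a single interval whose two endpoints have ratio exactly $k/\eps$. Fix the ordering convention that $r_1^* = \max_j r_j^*$, as in the rest of the paper. The upper end is trivial: $r_i^* \leq \max_j r_j^* = r_1^*$ for every $i$. For the lower end I would invoke the hypothesis that $\mathscr{B}^*$ is \ebalanced: by \Cref{def:eps-balanced} this is exactly the statement that $r_i^* \geq \frac{\eps}{k}\max_j r_j^* = \frac{\eps}{k}r_1^*$ for every $i$. Combining the two bounds, every radius $r_i^*$ lies in the interval $\bigl[\tfrac{\eps}{k}r_1^*,\, r_1^*\bigr]$.

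Next I would apply \Cref{cover-interval-with-eps-powers} with $a = \tfrac{\eps}{k}r_1^*$ and $b = r_1^*$, so that the relevant ratio is $b/a = k/\eps$ and $m = \ceil{\log_{(1+\eps)}(k/\eps)}$. The lemma then yields the set $\bigl\{(1+\eps)^{i}\tfrac{\eps r_1^*}{k} \;\mid\; i \in \{0,\ldots,m\}\bigr\}$, and guarantees that for every value in $[a,b]$ — in particular for every $r_i^*$ — this set contains a number $r_i$ with $r_i^* \leq r_i \leq (1+\eps)r_i^*$. This is precisely the set $R$ claimed in the statement, and its cardinality is $m+1 = O(\log_{(1+\eps)}(k/\eps))$, which is the size required for the downstream use in \Cref{lem:approximate-radii}.

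There is no real obstacle in this argument; it is a packaging of \Cref{cover-interval-with-eps-powers} with the correct interval. The only two points that need care are (i) recalling that "\ebalanced" is defined relative to the \emph{maximum} radius of the covering, which is what makes the single lower bound $\tfrac{\eps}{k}r_1^*$ valid simultaneously for all indices $i$, and (ii) choosing the base point $\tfrac{\eps r_1^*}{k}$ of the geometric grid to coincide with the left endpoint $a$ of the interval, so that the endpoints of the covering subintervals line up exactly with the elements of $R$. Everything else is the bookkeeping already performed inside \Cref{cover-interval-with-eps-powers}.
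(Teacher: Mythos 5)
Your proposal is correct and follows essentially the same route as the paper: use the \ebalanced{} hypothesis to confine every $r_i^*$ to the interval $\left[\frac{\eps r_1^*}{k},\, r_1^*\right]$ and then invoke \Cref{cover-interval-with-eps-powers} with $a = \frac{\eps r_1^*}{k}$ and $b = r_1^*$. Your exponent range $\{0,\ldots,\ceil{\log_{(1+\eps)}(k/\eps)}\}$ is in fact the correct one (the index bound $\frac{k}{\eps}$ in the lemma statement is a typo), so no further changes are needed.
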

\begin{proof}
    Since the solution is \ebalanced, we have $r_i^* \in \left[\frac{\eps r_1^*}{k}, r_1^*\right]$ for all $i\in \{2,\ldots, k\}$. The claim then immediately follows from \Cref{cover-interval-with-eps-powers}.
\end{proof}
This now immediately implies \Cref{lem:approximate-radii}.
\guessingradii*
\begin{proof}
    \Cref{lem:guess-largest-radius} shows how to obtain the candidate set for the largest radius of the desired size, \Cref{lem:remaining-radii} shows it for the remaining radii.
\end{proof}
As mentioned previously, there is an alternative way of guessing the radii that is not require an approximation algorithm for the $k$-center variant as a subroutine. The drawback of this approach is that the size of the candidate set now depends on $n$.
\begin{lemma}
    \label{lem:approximate-largest-radius}
    There exists a set $R \subset \mathbb{R}$ of size $n^{O(\frac{1}{\eps})}$ computable in $\frac{d}{\eps^2} n^{O(\frac{1}{\eps})}$ time, that contains, for all sets $C \subset P$, a radius $r_C \in R$, such that $\cost(C) \leq r_C \leq (1 + \eps) \cost(C)$.
\end{lemma}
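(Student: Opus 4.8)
The plan is to build $R$ by brute-force enumeration of all small subsets of $P$, exploiting the existence of small core-sets for the minimum enclosing ball. Recall the classical core-set property: for any finite $Q \subseteq \mathbb{R}^d$ and any $\delta > 0$ there is a subset $Q' \subseteq Q$ with $|Q'| = O(1/\delta)$ such that the ball concentric with $\mball{Q'}$ but with its radius scaled by $(1 + \delta)$ already contains $Q$; in particular $\cost(Q') \le \cost(Q) \le (1 + \delta)\cost(Q')$. This is the core-set result for minimum enclosing balls originating from~\cite{BadoiuHI02} (in the refined form yielding core-sets of size $O(1/\delta)$, which is what makes the final exponent $O(1/\eps)$ rather than $O(1/\eps^2)$).

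First I would fix $\delta = \eps/3$, so that $(1 + \delta)^2 \le 1 + \eps$ for $0 < \eps < 1$, and let $t = O(1/\delta) = O(1/\eps)$ be the corresponding core-set size. I would then iterate over every subset $Q \subseteq P$ with $|Q| \le t$, of which there are $\sum_{i \le t}\binom{n}{i} = n^{O(1/\eps)}$, and for each such $Q$ compute a $(1 + \delta)$-approximation $\widetilde{r_Q}$ of $\cost(Q) = \cost(\mball{Q})$ using the algorithm of~\cite{Yildirin2008}, which runs in time $O(|Q| \cdot d/\delta) = O(d/\eps^2)$. Into $R$ I put the value $(1 + \delta)\widetilde{r_Q}$ for every enumerated $Q$. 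The total running time is then $O(d/\eps^2) \cdot n^{O(1/\eps)} = \frac{d}{\eps^2} n^{O(1/\eps)}$, and $|R| \le n^{O(1/\eps)}$, as required.

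For correctness, take an arbitrary $C \subseteq P$; the case $\cost(C) = 0$ (where $C$ is its own core-set and $\widetilde{r_Q} = 0$) is trivial, so assume $\cost(C) > 0$. Let $Q \subseteq C$ be a core-set of $C$ of size at most $t$; since $Q \subseteq C \subseteq P$, the value $r_C := (1 + \delta)\widetilde{r_Q}$ lies in $R$. By the core-set property, $\cost(Q) \le \cost(C) \le (1 + \delta)\cost(Q)$, and by the MEB-approximation guarantee $\cost(Q) \le \widetilde{r_Q} \le (1 + \delta)\cost(Q)$. Combining these gives $r_C = (1 + \delta)\widetilde{r_Q} \ge (1 + \delta)\cost(Q) \ge \cost(C)$ and $r_C = (1 + \delta)\widetilde{r_Q} \le (1 + \delta)^2 \cost(Q) \le (1 + \delta)^2 \cost(C) \le (1 + \eps)\cost(C)$, which is exactly the claimed sandwich $\cost(C) \le r_C \le (1 + \eps)\cost(C)$.

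The only real subtlety — and the step I would be most careful about — is the bookkeeping of the two multiplicative losses (the core-set slack and the MEB-approximation slack), which is why I fix $\delta = \eps/3$ up front; everything else is a direct enumeration argument. It is worth stressing that this lemma is purely geometric: $R$ must contain a good approximation for \emph{every} subset $C \subseteq P$, not one tailored to a specific cluster, so no interaction with mergeability or fairness is needed here — the constraint-specific reasoning is confined to Algorithm~\ref{alg:clustering}.
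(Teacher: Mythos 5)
Your proposal is correct and follows essentially the same route as the paper: enumerate all subsets of $P$ of size $O(1/\eps)$ (justified by the existence of $O(1/\eps)$-size core-sets for minimum enclosing balls), approximate each MEB with Yildirim's algorithm, and collect the resulting radii. The only cosmetic difference is in handling the two compounded $(1+\cdot)$ losses: you rescale to $\delta = \eps/3$ up front and insert one value per subset, whereas the paper keeps $\eps$ and inserts both $r_K$ and $(1+\eps)r_K$, resolving the slack by a case distinction — both are valid.
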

\begin{proof}
    We combine two results from Yıldırım  the theory on $\eps$-coresets. An $\eps$-coreset $K$ of a set $X \subset \mathbb{R}^d$ is a small subset which is almost as large as $X$, in the sense that $\cost(X) \leq (1+\eps)\cost(K)$. It is known, that for every subset $C \subset \mathbb{R}^d$ there exists an $\eps$-coreset of size at most $O(\frac{1}{\eps})$ (cf. Theorem 3.3 of~\cite{Yildirin2008}). This gives us our search space. For all subsets $K \subset P$ of size at most $O(\frac{1}{\eps})$ we compute a $(1 + \eps)$-approximation $B_K = \ball{c_K}{r_K}$ of $\mball{K}$ using Algorithm~3.1 of~\cite{Yildirin2008}. Since $K$ is an $\eps$-coreset and $B_K$ a $(1 + \eps)$-approximation we can do the following case distinction. If $r_K \leq \cost(C)$, then
    \[ \cost(C) \leq (1 + \eps)\cost(K) \leq (1 + \eps)r_K \leq (1 + \eps) \cost(C)\]
    and if $r_K > \cost(C)$, then
    \[ \cost(C) \leq r_K \leq (1 + \eps) \cost(K) \leq (1 + \eps) \cost(C). \]
    One of the two, $r_K$ or $(1+\eps)r_K$, must satisfy the condition posed in the lemma, so we can just add both to $R$.
    There exist at most $n^{O\left(\frac{1}{\eps}\right)}$ different subsets $K \subset P$ of size at most $O(1/\eps)$, so it follows that $R$ is at most twice as large. Combining this with the fact that Yildirim's algorithm has a running time of $O(d/\eps^2)$ on these small instances (see Theorem 3.2 of~\cite{Yildirin2008}) completes the proof.
\end{proof}
So even if we want to solve \textsf{$k$-MSR} with constraints for which we do not have access to an approximation algorithm for the corresponding $k$-center problem, we can employ this method, which leads to the overall runtime shown in \Cref{thm:result_2}. 

\section{Mergeable Constraints}\label{sec:mergeable}

Let $\mathscr{C} = \{C_1, \ldots, C_k\}$ be a clustering of a finite set of points $P$.
Recall the definition of {\em mergeable constraints}:

\setcounter{definition}{0}
\begin{definition}
    A clustering constraint is \emph{mergeable} if the union $C \cup C'$ of any possible pair of clusters $C, C'$ satisfying the constraint does itself satisfy the constraint (cf.~\cite{Arutyunova021}). In other words, merging clusters does not destroy their property of satisfying the constraint.
\end{definition}

In the following, we will list a few mergeable constraints and explain why they have this property.

\setcounter{definition}{13}
\begin{definition}[Uniform Lower Bounds]
    Let $l \in \mathbb{N}$.
    $\mathscr{C}$ fulfills the uniform lower bounds constraint if for $|C_i| \geq l$ for every $i\leq k$. 
\end{definition}
The uniform lower bounds constraint obviously is mergeable as the union of two clusters contains at least as many points as any of the two individually.

\begin{definition}[Outliers]
    Let $z\in \mathbb{N}$. We say that $\mathscr{C}$ is a clustering of $P$ with at most $z$ outliers if $|\cup_{i\leq k} C_i| \geq |P|-z$.
\end{definition}
Let $\mathscr{C}$ be a clustering fulfilling the outlier constraint with $z\in \mathbb{N}$. Then, by merging two clusters, the number of points covered by the resulting clustering does not decrease. However, we may not merge two outlier clusters. Thus, the outlier constraint is kind-of mergeable -- one would need to adapt our algorithm a bit.

For defining the fairness notions, we need a finite set of colors $\mathcal{H}$ that constitute group memberships. Every point in the set $P$ is assigned a color that indicates which group this point belongs to.
For $X\subseteq P$ and $h\in \mathcal{H}$, let $\col_{h}(X) \subseteq X$ denote the subset of points within $X$ that carry color $h$.

The {\em exact fairness} notion from \cite{bercea2018cost} requires that in every cluster, the proportion of points of a certain color is the same as the proportion of this color within the complete point set $P$.
\begin{definition}[Exact Fairness]
    The clustering $\mathscr{C}$ fulfills {\em exact fairness} if 
    \[ \frac{|\col_{h}(C)|}{|C|} = \frac{|\col_h(P)|}{|P|} \]
    for every $h\in \mathcal{H}$ and $C\in \mathscr{C}$.
\end{definition}
Because of the easy observation $\frac{a}{b} = \frac{c}{d} \Rightarrow \frac{a+c}{b+d} = \frac{a}{b}$ for $a,b,c,d \in \mathbb{R}^+,\ b,d\neq 0$, it directly follows that this constraint is mergeable.
Bercea et al. \cite{bercea2018cost} provide a 5-approximation for $k$-center under the exact fairness constraint.

Among the fairness notions, there also exist several different definitions of what it means for a clustering to be {\em balanced}.
The first notion is defined for two colors.
\begin{definition}[Balance Notion by Chierichetti et al.~\cite{chierichetti2017fair}]
    Let $\mathcal{H} = \{h_1, h_2\}$ and $b\in [0,1]$. The clustering $\mathscr{C}$ is said to have a balance of at least $b$ if 
    \[ \min_{C\in\mathscr{C}} \min \left\{ 
    \frac{|\col_{h_1}(C)|}{|\col_{h_2}(C)|}, \frac{|\col_{h_2}(C)|}{|\col_{h_1}(C)|}  \right\} \geq b. \]
\end{definition}
It can easily be seen that the balance of two merged clusters is at least as high as the balance of any individual cluster. 
There exists a 14-approximation for $k$-center under this notion of fairness by Rösner and Schmidt \cite{rosner2018privacy}. For the special case that the balance is lower bounded by 1 or $1/t$ for some positive integer $t$, there are even a 3- and a 4-approximation, respectively \cite{chierichetti2017fair}.

The following balance notion proposed in \cite{bohm2020fair} is more general in that it allows for an arbitrary number of colors, but the form of the constraint itself is stricter.
\begin{definition}[Balance Notion by Böhm et al.~\cite{bohm2020fair}]
    We say $\mathscr{C}$ is {\em exactly balanced} if 
    \[|\col_{h_1}(C)| = |\col_{h_2}(C)|\]
    for all $C\in \mathscr{C}$ and all $h_1, h_2 \in \mathcal{H}$.
\end{definition}
If for all $C\in \mathscr{C}$, $|\col_{h_1}(C)| = |\col_{h_2}(C)|$, then for $C_1, C_2 \in \mathscr{C}$, it easily follows
\[ |\col_{h_1}(C_1 \cup C_2)| = |\col_{h_1}(C_1)| + |\col_{h_1}(C_2)| = |\col_{h_2}(C_1)| + |\col_{h_2}(C_2)| = |\col_{h_2}(C_1 \cup C_2)|. \]
Böhm et al. \cite{bohm2020fair} show how the $k$-center problem under this fairness notion can be reduced to the unconstrained variant while increasing the approximation factor by 2. 
A more relaxed notion is the one used in \cite{bera2019fair}, as it allows to specify a range for every color constraining the number of points of this color in any cluster.
\begin{definition}[Balance Notion by Bera et al.~\cite{bera2019fair}]
    For every $h\in \mathcal{H}$, let $\alpha_h, \beta_h \in [0,1]$. The clustering $\mathscr{C}$ is said to be balanced with respect to the vectors $(\alpha_h)_{h\in \mathcal{H}}$ and $(\beta_h)_{h\in \mathcal{H}}$ if
    \[ \alpha_h|C| \leq |\col_h(C)| \leq \beta_h|C| \]
    for all $C\in \mathscr{C}$ and $h\in \mathcal{H}$.
\end{definition}
If the clusters $C_1, C_2 \in \mathscr{C}$ fulfill the constraints from the definition above, then
\[ \alpha_h |C_1 \cup C_2| = \alpha_h|C_1| + \alpha_h|C_2| \leq |\col_h(C_1)| + |\col_h(C_2)| \leq \beta_h |C_1| + \beta_h|C_2| = \beta_h|C_1\cup C_2| \]
for every $h\in \mathcal{H}$, and because $C_1$ and $C_2$ as well as $\col_h(C_1)$ and $\col_h(C_2)$ are disjoint it is $|\col_h(C_1)| + |\col_h(C_2)| = |\col_h(C_1) \cup \col_h(C_2)| = |\col_h(C_1\cup C_2)| $ and hence the union $C_1\cup C_2$ satisfies the definition.
Harb and Shan \cite{harb2020kfc} give a 5-approximation for the $k$-center problem under this constraint. 

The special case that only considers upper bounds, i.e. all $\alpha_h$ are set to 0, is also sometimes referred to as {\em bounded representation} or {\em $l$-diversity} constraint \cite{ahmadian2019clustering}.

\end{document}